\documentclass[letterpaper, 10 pt, conference]{ieeeconf}

\IEEEoverridecommandlockouts
\usepackage[utf8]{inputenc}
\usepackage{amsmath,amssymb,amsfonts,mathtools}

\usepackage{amsthm}
\usepackage{xcolor}
\usepackage{graphicx}
\usepackage{float}
\usepackage{babel}
\usepackage{algorithm}
\usepackage{algorithmic}
\usepackage{bbm}
\usepackage{booktabs}
\usepackage{caption}
\usepackage{subcaption}
\usepackage[hidelinks]{hyperref}

\newtheorem{ass}{Assumption}
\newtheorem{lemma}{Lemma}
\newtheorem{theorem}{Theorem}
\newtheorem{corollary}{Corollary}
\newtheorem{proposition}{Proposition}
\newtheorem{remark}{Remark}

\title{Residual-Conservative Model Predictive Path Integral Control}

\author{
Hyung-Jin Yoon$^{\dagger}$ and
Hunmin Kim$^{\ddagger}$
\thanks{$^{\dagger}$H.-J. Yoon is with the
Department of Mechanical and Nuclear Engineering, Tennessee Technological
University, Cookeville, TN, USA.}%
\thanks{$^{\ddagger}$H. Kim is with the School of Engineering, Department
of Electrical and Computer Engineering, Mercer University, Macon, GA, USA.}%
\thanks{This work was supported by internal funding at Tennessee
Technological University.}
}

\begin{document}
\maketitle

\begin{abstract}
Sampling-based model predictive control methods handle nonlinear
dynamics and complex cost landscapes through Monte Carlo rollouts,
yet typically employ fixed constraint penalties that do not adapt to
model-plant mismatch. This paper proposes RC-MPPI, a sampling-based
MPC framework that modulates safety conservatism online via three
coupled mechanisms: residual-dependent constraint tightening, adaptive
safety-cost shaping, and residual-adaptive temperature relaxation.
The temperature adaptation reflects a key insight: when the model is
inaccurate, rollout cost evaluations become unreliable, and raising
temperature reduces overcommitment to apparent cost rankings. As
model-plant mismatch increases, safety margins tighten and the MPPI
temperature rises; as prediction fidelity improves, nominal MPPI
behavior is recovered. Under Lipschitz dynamics and sub-Gaussian
disturbances, we derive probabilistic bounds on constraint violation
and establish that the joint effect of all three mechanisms
monotonically reduces violation probability with growing residual. A
rollout-cost uncertainty analysis further shows that mismatch-induced
perturbations of MPPI importance weights scale proportionally to
residual magnitude and inversely with temperature, providing
theoretical justification for residual-adaptive temperature relaxation
under model-plant mismatch. We extend the method to a
two-time-scale architecture with episodic model refinement.
Simulations on an LTI point-mass and a planar 2R manipulator confirm
systematic improvements in safety margin, success rate, and control
efficiency over vanilla MPPI.
\end{abstract}

\section{Introduction}
Low-cost robotic and embedded control platforms frequently exhibit
execution variability arising from actuation lag, saturation, unmodeled
inner-loop dynamics, and limited sensing fidelity. When high-level
planners rely on simplified or nominal actuator models, such effects
induce model-plant mismatch that can compromise constraint
satisfaction. In receding-horizon implementations, this mismatch is
persistent and state-dependent rather than a one-time disturbance.

Sampling-based model predictive control methods, such as Model
Predictive Path Integral (MPPI)
control~\cite{williams2015mppi,williams2017tor}, handle nonlinear
dynamics and complex cost landscapes through Monte Carlo rollouts and
importance weighting. However, standard formulations assume a fixed
nominal model and employ static constraint penalties or barrier
functions that do not adapt to model-plant mismatch. When
model-plant mismatch grows, fixed safety mechanisms may become
insufficient.

This paper addresses the problem of online conservatism adaptation
under model-plant mismatch. Rather than performing real-time parameter
identification or maintaining a full disturbance belief, we exploit
the prediction--execution residual, the directly measurable discrepancy
between predicted and realized state transitions, as a lightweight
signal of model-plant mismatch, and embed it into the sampling-based
MPC objective via three coupled adaptive mechanisms.

The resulting method, Residual-Conservative MPPI (RC-MPPI), enforces
safety through residual-dependent constraint tightening and amplified
penalty scaling, while the MPPI temperature is \emph{relaxed} in
proportion to the observed prediction--execution residual. Standard
MPPI treats temperature as a fixed exploration parameter; under
model-plant mismatch, however, this conflates genuine cost differences
with artifacts of an inaccurate model. RC-MPPI instead treats
temperature as an epistemic parameter encoding confidence in rollout
cost evaluations. We show that the $\ell_1$ deviation between true
and nominal importance weights is bounded by $2C_\Delta \bar{s}_k /
\beta_k$, where $C_\Delta$ captures cost sensitivity to model-plant
mismatch and $\bar{s}_k$ is the filtered residual. Since this bound
grows with mismatch and shrinks with temperature, raising $\beta_k$
as $\bar{s}_k$ increases directly limits the distortion of MPPI
weights by corrupted rollout cost rankings.

\textbf{Contributions.}
\begin{itemize}
\item RC-MPPI, a sampling-based MPC scheme with three coupled
residual-adaptive mechanisms: constraint tightening, penalty scaling,
and residual-adaptive sampling modulation comprising temperature
relaxation and exploration contraction, with no additional
computational cost over standard MPPI. Temperature is treated as an
epistemic parameter encoding confidence in rollout cost evaluations
under model-plant mismatch.
\item A probabilistic safety analysis grounded in an $N$-step horizon
prediction error bound (Theorem~\ref{thm:horizon_bound}), establishing
that the joint effect of all three mechanisms monotonically reduces
constraint violation probability with growing residual
(Proposition~\ref{prop:joint_safety}).
\item A rollout-cost uncertainty analysis
(Proposition~\ref{prop:temp_sensitivity}) showing that
mismatch-induced perturbations of MPPI importance weights are bounded
by $2C_\Delta \bar{s}_k / \beta_k$, providing theoretical
justification for raising temperature as model-plant mismatch grows.
\end{itemize}

\section{Related Work}
Model Predictive Path Integral (MPPI) control is a sampling-based
approximation of stochastic optimal control derived from
information-theoretic
principles~\cite{williams2015mppi,williams2017tor}. By performing
importance sampling over trajectory rollouts and computing a soft-min
update, MPPI enables real-time nonlinear control without explicit
gradient computation. In baseline implementations, constraints are
handled through soft penalties, barrier-like shaping, or rollout
truncation; however, hard rejection of infeasible trajectories reduces
the effective sample size, leading to weight degeneracy. To address
safety requirements, recent work integrates Control Barrier
Functions~\cite{ames2017cbf} into MPPI. Shield-MPPI introduces a
CBF-inspired shielding mechanism~\cite{yu2023shieldmppi}, and
guaranteed-safe MPPI variants employ composite CBF
constructions~\cite{rabiee2025guaranteed}. Other approaches propagate
uncertainty explicitly: unscented MPPI uses sigma-point
approximations~\cite{unscentedmppi2023}, and belief-space stochastic
MPPI enforces approximate chance constraints~\cite{bssmppi2024}.

More broadly, chance-constrained MPC enforces probabilistic constraint
satisfaction~\cite{mesbah2016smc,blackmore2011chance}, and classical
risk-sensitive control penalizes tail events through exponential cost
transformations~\cite{whittle1990risk}. Robust MPC enforces safety
through tube-based tightening~\cite{rawlings2017mpcbook}, and
learning-based MPC incorporates model adaptation while preserving
constraint satisfaction~\cite{aswani2013lbmpc}. RC-MPPI complements
these directions by modulating conservatism using the filtered
prediction--execution residual. A distinguishing feature is the
treatment of temperature as an \emph{epistemic} parameter encoding
confidence in rollout cost evaluations, supported by a rollout-cost
uncertainty analysis that links model-plant mismatch,
importance-weight sensitivity, and adaptive temperature selection.

\section{System Model}\label{sec:system_model}
Consider a discrete-time system with state $\mathbf{x}_k \in
\mathcal{X} \subseteq \mathbb{R}^{n_x}$ and control $\mathbf{u}_k
\in \mathcal{U} \subseteq \mathbb{R}^{n_u}$, where $\mathcal{X}$ and
$\mathcal{U}$ are compact sets. The controller plans using a nominal
parametric predictor
\begin{equation}\label{eq:nominal_dynamics}
\mathbf{x}_{k+1} = f_\theta(\mathbf{x}_k, \mathbf{u}_k),
\end{equation}
where $\theta$ denotes the nominal model parameter. The true system
evolves as $\mathbf{x}_{k+1} = \tilde{f}(\mathbf{x}_k, \mathbf{u}_k)
+ \mathbf{w}_k$, where $\tilde{f}$ is the true (unknown) dynamics and
$\mathbf{w}_k \in \mathbb{R}^{n_x}$ is a stochastic process
disturbance. The controller receives noisy state measurements
$\mathbf{y}_k = \mathbf{x}_k + \mathbf{v}_k$, where $\mathbf{v}_k
\in \mathbb{R}^{n_x}$ is the measurement noise. At each time step
$k$, the controller optimizes over a finite planning horizon of
$N \geq 1$ steps. Safety is encoded by a constraint function
$h : \mathcal{X} \rightarrow \mathbb{R}$, where the safe set is
$\mathcal{X}_s := \{\mathbf{x} \in \mathcal{X} \mid h(\mathbf{x})
\leq 0\}$ and $h$ is Lipschitz with constant $L_h > 0$.

\begin{ass}[Local Lipschitz Nominal Dynamics]\label{ass:lipschitz}
There exists $L_f > 0$ such that for all $(\mathbf{x}, \mathbf{u}),
(\mathbf{x}', \mathbf{u})$ in $\mathcal{X} \times \mathcal{U}$,
$\|f_\theta(\mathbf{x}, \mathbf{u}) - f_\theta(\mathbf{x}',
\mathbf{u})\| \leq L_f \|\mathbf{x} - \mathbf{x}'\|$.
\end{ass}

\begin{ass}[Sub-Gaussian Disturbance]\label{ass:subgaussian}
$\mathbf{w}_k$ is i.i.d.\ sub-Gaussian with scalar variance proxy
$\sigma_x^2 > 0$: there exist $a_1, a_2 > 0$ such that
$\mathbb{P}(\|\mathbf{w}_k\| \geq t) \leq a_1 \exp(-a_2 t^2 /
\sigma_x^2)$ for all $t \geq 0$. Furthermore, $\mathbf{w}_k$ is
independent of the natural filtration $\mathcal{F}_k :=
\sigma(\mathbf{x}_0, \mathbf{w}_0, \mathbf{v}_0, \ldots,
\mathbf{w}_{k-1}, \mathbf{v}_{k-1})$ generated by the system history
up to time $k$.
\end{ass}

\begin{ass}[Nominal MPPI Competence]\label{ass:nominal_competence}
Under zero prediction--execution residual ($\bar{s}_k = 0$, as
defined in~\eqref{eqn:fil_statistics}), vanilla MPPI with nominal
temperature $\beta_0$ achieves constraint satisfaction with
probability at least $1 - \delta_0$ for some $\delta_0 \in (0, 1)$,
and the prior mean control sequence $\bar{\mathbf{U}}$ (initialized
from the shifted solution of the previous time step) lies in the safe
region with probability at least $1 - \delta_0$.
\end{ass}

\begin{remark}[Practical validity of
Assumption~\ref{ass:nominal_competence}]
Assumption~\ref{ass:nominal_competence} is behavioral, not structural:
it requires no convexity or geometric property of the cost landscape,
only that the baseline works when the model is accurate. This is
consistent with the simulation results in
Section~\ref{sec:simulation}, where vanilla MPPI performs competently
under moderate mismatch conditions.
\end{remark}

\begin{ass}[Bounded Measurement Noise]\label{ass:bounded_noise}
The measurement noise satisfies $\|\mathbf{v}_k\| \leq \bar{v}$
almost surely for some $\bar{v} > 0$.
\end{ass}

\begin{remark}[Practical validity of
Assumption~\ref{ass:bounded_noise}]
Assumption~\ref{ass:bounded_noise} is satisfied by sensors with
finite resolution or hardware-level clipping, which is standard on
low-cost robotic and embedded platforms. It separates measurement
noise, which is bounded by sensor characteristics, from process
disturbance $\mathbf{w}_k$, which is modeled as sub-Gaussian to
capture environment uncertainty.
\end{remark}

\subsection{Residual Estimation}
After executing $\mathbf{u}_{k-1}$ and measuring $\mathbf{y}_k =
\mathbf{x}_k + \mathbf{v}_k$, the one-step prediction residual is
\begin{equation}\label{eq:residual_def}
\mathbf{r}_k := \mathbf{y}_k - f_\theta(\mathbf{y}_{k-1},
\mathbf{u}_{k-1}).
\end{equation}
A scalar mismatch indicator $s_k := \|\mathbf{W}_r \mathbf{r}_k\|$
(invertible weighting matrix $\mathbf{W}_r$) is filtered as
\begin{equation}\label{eqn:fil_statistics}
\bar{s}_k = (1-\rho)\bar{s}_{k-1} + \rho s_k.
\end{equation}
Define the aggregated horizon disturbance
\begin{equation}\label{eq:xi_def}
\boldsymbol{\xi}_{k+N} := \sum_{t=0}^{N-1}
L_f^{N-1-t}\mathbf{w}_{k+t},
\end{equation}
which accumulates the stochastic disturbances over the planning
horizon weighted by the Lipschitz expansion factor $L_f$. Since
$\mathbf{r}_k$ depends only on $\mathbf{y}_k$, $\mathbf{y}_{k-1}$,
and $\mathbf{u}_{k-1}$, the filtered statistic $\bar{s}_k$ is
$\mathcal{F}_{k+1}$-measurable. By
Assumption~\ref{ass:subgaussian}, the future disturbances
$\mathbf{w}_{k}, \ldots, \mathbf{w}_{k+N-1}$ are independent of
$\mathcal{F}_{k+1}$, so $\boldsymbol{\xi}_{k+N}$ is independent of
$\bar{s}_k$ given $\mathcal{F}_{k+1}$. This separation is the key
that allows the deterministic residual-dependent term $c_r \bar{s}_k
+ c_0$ and the stochastic term $\|\boldsymbol{\xi}_{k+N}\|$ to be
bounded independently in Theorem~\ref{thm:horizon_bound}.

\begin{theorem}[Implementable Horizon Prediction Error Bound]
\label{thm:horizon_bound}
Under Assumptions~\ref{ass:lipschitz}--\ref{ass:bounded_noise},
conditioned on $\mathcal{F}_{k+1}$, the terminal prediction error
$\mathbf{e}_{k+N} := \mathbf{x}_{k+N} - \hat{\mathbf{x}}_{k+N}$
satisfies
\begin{equation}\label{eq:corollary_bound}
\|\mathbf{e}_{k+N}\| \leq c_r \bar{s}_k + c_0 +
\|\boldsymbol{\xi}_{k+N}\|,
\end{equation}
where
\[
c_r = S_N \|\mathbf{W}_r^{-1}\|/\rho,
\qquad
c_0 = (L_f^N + S_N)\bar{v},
\]
are $\mathcal{F}_{k+1}$-measurable planning-time constants, and
$\boldsymbol{\xi}_{k+N}$ is independent of $\mathcal{F}_{k+1}$ with
sub-Gaussian variance proxy $\sigma_x^2(L_f^{2N}-1)/(L_f^2-1)$.
\end{theorem}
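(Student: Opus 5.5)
The plan is to unroll the prediction-error recursion along the horizon and then express the model-mismatch contribution in terms of the measured residual via the filter~\eqref{eqn:fil_statistics}. Here $\hat{\mathbf{x}}_{k+N}$ is the nominal rollout of~\eqref{eq:nominal_dynamics} initialized at the measurement, $\hat{\mathbf{x}}_k = \mathbf{y}_k$, under the same control sequence $\mathbf{u}_k,\dots,\mathbf{u}_{k+N-1}$ that is applied to the plant. I will write $S_N := \sum_{t=0}^{N-1} L_f^{t} = (L_f^N-1)/(L_f-1)$ and $\mathbf{d}_t := \tilde f(\mathbf{x}_t,\mathbf{u}_t) - f_\theta(\mathbf{x}_t,\mathbf{u}_t)$ for the one-step mismatch.

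\textbf{Step 1 (error recursion).} For $t = k,\dots,k+N-1$,
\[
\mathbf{e}_{t+1} = \mathbf{d}_t + \bigl[f_\theta(\mathbf{x}_t,\mathbf{u}_t) - f_\theta(\hat{\mathbf{x}}_t,\mathbf{u}_t)\bigr] + \mathbf{w}_t .
\]
I will keep the disturbance terms as vectors rather than bounding them in norm step by step. Applying Assumption~\ref{ass:lipschitz} to the bracket and unrolling from $\|\mathbf{e}_k\| = \|\mathbf{v}_k\| \le \bar v$ (Assumption~\ref{ass:bounded_noise}) should give
\[
\|\mathbf{e}_{k+N}\| \le L_f^N \bar v + \sum_{t=0}^{N-1} L_f^{N-1-t}\|\mathbf{d}_{k+t}\| + \|\boldsymbol{\xi}_{k+N}\| .
\]
Keeping the $\mathbf{w}$'s aggregated as the vector $\boldsymbol{\xi}_{k+N}$ from~\eqref{eq:xi_def} is exactly what the statement requires. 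To justify this rigorously, I will write the error as a propagated sum: the Lipschitz maps act on $\mathbf{e}_t$, and the contraction-type bound is applied to the deterministic part only. One way is a comparison argument on $\mathbf{e}_t - \sum_{\tau<t} L_f^{t-1-\tau}\mathbf{w}_\tau$, or, if that fails, accepting the slightly weaker per-step triangle inequality.

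\textbf{Step 2 (mismatch via residual).} From~\eqref{eqn:fil_statistics} and $\bar s_{k-1}\ge 0$ we get $\rho s_k \le \bar s_k$. Invertibility of $\mathbf{W}_r$ then gives $\|\mathbf{r}_k\| \le \|\mathbf{W}_r^{-1}\| s_k \le \|\mathbf{W}_r^{-1}\|\bar s_k/\rho$. I then need each $\|\mathbf{d}_{k+t}\|$ to be bounded by $\|\mathbf{r}_k\| + \bar v$. The natural source is expanding
\[
\mathbf{r}_k = \mathbf{d}_{k-1} + \mathbf{w}_{k-1} + \mathbf{v}_k + \bigl[f_\theta(\mathbf{x}_{k-1},\mathbf{u}_{k-1}) - f_\theta(\mathbf{y}_{k-1},\mathbf{u}_{k-1})\bigr],
\]
together with a persistence reading of the mismatch over the horizon, namely that the current residual is representative of $\mathbf{d}$ on the planning window. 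Summing $L_f^{N-1-t}(\|\mathbf{r}_k\|+\bar v)$ over $t$ yields exactly $c_r\bar s_k + S_N\bar v$. Combined with the $L_f^N\bar v$ term, this gives $c_0 = (L_f^N+S_N)\bar v$. Both constants depend only on data available at planning time, so $c_r \bar s_k + c_0$ is $\mathcal{F}_{k+1}$-measurable, as argued before the theorem.

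\textbf{Step 3 (stochastic term).} $\boldsymbol{\xi}_{k+N}$ involves only $\mathbf{w}_k,\dots,\mathbf{w}_{k+N-1}$, which are independent of $\mathcal{F}_{k+1}$ by Assumption~\ref{ass:subgaussian}. Being a weighted sum of independent sub-Gaussian vectors, it is sub-Gaussian with variance proxy $\sigma_x^2\sum_{t=0}^{N-1}L_f^{2t} = \sigma_x^2(L_f^{2N}-1)/(L_f^2-1)$, by the standard additivity of variance proxies under independence.

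\textbf{Main obstacle.} I expect Step 2 to be the hardest part: turning a past one-step residual into a bound on future mismatch $\mathbf{d}_{k+t}$ over the whole horizon. This does not follow from the stated assumptions alone. It also has to be done without letting $\mathbf{w}_{k-1}$ leak into the deterministic term beyond what $\bar s_k$ already absorbs, since $\bar s_k$ is computed from data and therefore includes that disturbance. I would first try to state explicitly the implicit modeling convention that $\|\mathbf{d}_{k+t}\| \le \|\mathbf{r}_k\| + \bar v$ on the horizon, i.e.\ slowly varying mismatch, and verify that the claimed constants follow. If a cleaner route is needed, I would instead bound $\sup_{\mathcal{X}\times\mathcal{U}}\|\mathbf{d}\|$ and interpret $c_r\bar s_k$ as its implementable surrogate.
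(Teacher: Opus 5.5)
Your plan follows the paper's proof. Both unroll the Lipschitz recursion from $\|\mathbf{e}_k\|\le\bar v$, bound each step's mismatch by a residual plus $\bar v$, use $\rho s_k\le\bar s_k$, sum with weights $L_f^{N-1-t}$ to get $c_r\bar s_k+(L_f^N+S_N)\bar v$, and treat $\boldsymbol{\xi}_{k+N}$ as a weighted sum of independent sub-Gaussians.

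The only difference is bookkeeping in Step~2. The paper bounds the step-$t$ mismatch by the \emph{future} residual, $\|\boldsymbol{\Delta}_{k+t}\|\le\|\mathbf{W}_r^{-1}\|s_{k+t+1}+\bar v$, and then invokes a ``planning-window stationarity condition'' $\max_{0\le t\le N-1}s_{k+t+1}\le s_k$. You posit $\|\mathbf{d}_{k+t}\|\le\|\mathbf{r}_k\|+\bar v$ directly. These are the same extra hypothesis. You are right that it does not follow from Assumptions~\ref{ass:lipschitz}--\ref{ass:bounded_noise}; the paper simply asserts it.

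Your leakage worry is also justified. Your own expansion of $\mathbf{r}_k$ only gives $\|\mathbf{d}_{k-1}\|\le\|\mathbf{r}_k\|+\|\mathbf{w}_{k-1}\|+(1+L_f)\bar v$. The paper's per-step bound silently drops the $\|\mathbf{w}_{k+t}\|+L_f\bar v$ part. So the stated constants survive only if this is built into the persistence hypothesis, as you propose.

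The step that genuinely fails is Step~1, and neither of your remedies repairs it.
\begin{itemize}
\item The comparison argument on $\mathbf{e}_t-\sum_{\tau<t}L_f^{t-1-\tau}\mathbf{w}_\tau$ leaves the term $f_\theta(\mathbf{x}_t,\mathbf{u}_t)-f_\theta(\hat{\mathbf{x}}_t,\mathbf{u}_t)-L_f\mathbf{e}_t$. This vanishes only when $f_\theta$ is affine in $\mathbf{x}$ with linear part $L_f\mathbf{I}$.
\item The triangle-inequality fallback produces $\sum_{t}L_f^{N-1-t}\|\mathbf{w}_{k+t}\|\ge\|\boldsymbol{\xi}_{k+N}\|$. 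That is not ``slightly weaker''; it does not imply~\eqref{eq:corollary_bound}.
\end{itemize}
The paper's ``upon unrolling'' skips exactly this point, and the gap cannot be filled under Assumptions~\ref{ass:lipschitz}--\ref{ass:bounded_noise}, even with your persistence convention. Take $f_\theta(\mathbf{x},\mathbf{u})=-\mathbf{x}$ (so $L_f=1$), $\tilde f=f_\theta$ (so $\mathbf{d}\equiv\mathbf{0}$), $\mathbf{v}\equiv\mathbf{0}$, $\bar s_0=0$, Gaussian $\mathbf{w}_k$, and $N=2$. Then $\mathbf{r}_j=\mathbf{w}_{j-1}$, so $\bar s_k$ depends only on $\mathbf{w}_0,\dots,\mathbf{w}_{k-1}$, while
\[
\mathbf{e}_{k+2}=\mathbf{w}_{k+1}-\mathbf{w}_k,
\qquad
\boldsymbol{\xi}_{k+2}=\mathbf{w}_k+\mathbf{w}_{k+1}.
\]
With positive probability the past disturbances are small while $\mathbf{w}_{k+1}\approx-\mathbf{w}_k$ is large, which violates the bound.

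What you can actually prove is one of two modified statements:
\begin{itemize}
\item the version with $\sum_t L_f^{N-1-t}\|\mathbf{w}_{k+t}\|$ in place of $\|\boldsymbol{\xi}_{k+N}\|$, with its tail redone because that sum is not centered; or
\item for $f_\theta(\mathbf{x},\mathbf{u})=\mathbf{A}\mathbf{x}+\mathbf{B}\mathbf{u}$ with $\|\mathbf{A}\|\le L_f$, the version with $\sum_t\mathbf{A}^{N-1-t}\mathbf{w}_{k+t}$, which has the same variance-proxy bound.
\end{itemize}

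Finally, in Step~3, $\mathbf{w}_k$ is $\mathcal{F}_{k+1}$-measurable by the definition of $\mathcal{F}_{k+1}$, so $\boldsymbol{\xi}_{k+N}$ is not independent of $\mathcal{F}_{k+1}$. Condition instead on $\sigma(\mathcal{F}_k,\mathbf{v}_k)$, which still makes $\bar s_k$ measurable, and assume $\mathbf{w}_k$ is independent of $(\mathcal{F}_k,\mathbf{v}_k)$. The paper makes the same indexing slip.
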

\begin{proof}
\textit{Step~1 (One-step residual-to-deviation).}
The one-step prediction residual satisfies
$\mathbf{r}_k = (\mathbf{x}_k - \hat{\mathbf{x}}_{k|k-1}) +
\mathbf{v}_k$, so
\[
\|\mathbf{x}_k - \hat{\mathbf{x}}_{k|k-1}\|
\leq \|\mathbf{r}_k\| + \bar{v}
\leq \|\mathbf{W}_r^{-1}\|s_k + \bar{v}.
\]
The filter recursion~\eqref{eqn:fil_statistics} gives $\bar{s}_k
\geq \rho s_k$, hence $s_k \leq \rho^{-1}\bar{s}_k$, yielding
\begin{equation}\label{eq:one_step_bound}
\|\mathbf{x}_k - \hat{\mathbf{x}}_{k|k-1}\|
\leq \tilde{c}_r\bar{s}_k + \tilde{c}_0,
\quad
\tilde{c}_r := \|\mathbf{W}_r^{-1}\|/\rho,\;
\tilde{c}_0 := \bar{v}.
\end{equation}

\textit{Step~2 ($N$-step error propagation).}
Let $\boldsymbol{\delta}_t := \mathbf{x}_{k+t} -
\hat{\mathbf{x}}_{k+t}$ with $\|\boldsymbol{\delta}_0\| \leq
\bar{v}$ (since $\hat{\mathbf{x}}_k = \mathbf{y}_k =
\mathbf{x}_k + \mathbf{v}_k$). The recursion
\[
\boldsymbol{\delta}_{t+1}
= \underbrace{f_\theta(\mathbf{x}_{k+t},\mathbf{u}_{k+t})
- f_\theta(\hat{\mathbf{x}}_{k+t},\mathbf{u}_{k+t})}_{\leq
L_f\|\boldsymbol{\delta}_t\|}
+ \boldsymbol{\Delta}_{k+t} + \mathbf{w}_{k+t}
\]
with $\|\boldsymbol{\Delta}_{k+t}\| \leq \|\mathbf{W}_r^{-1}\|
s_{k+t+1} + \bar{v}$ gives, upon unrolling over $N$ steps,
\[
\|\boldsymbol{\delta}_N\|
\leq
S_N\|\mathbf{W}_r^{-1}\|\bar{s}_k^N
+ (L_f^N + S_N)\bar{v}
+ \|\boldsymbol{\xi}_{k+N}\|,
\]
where $\bar{s}_k^N := \max_{0\leq t\leq N-1}s_{k+t+1}$ and
$S_N = (L_f^N-1)/(L_f-1)$ for $L_f>1$ (or $N$ for $L_f=1$).

\textit{Step~3 (Replacing $\bar{s}_k^N$ with $\bar{s}_k$).}
Under the planning-window stationarity condition, the mismatch level
does not increase over the horizon $[k, k+N-1]$, so
$\bar{s}_k^N \leq s_k$ almost surely. The filter
recursion~\eqref{eqn:fil_statistics} gives $\bar{s}_k \geq \rho s_k$,
hence $s_k \leq \rho^{-1}\bar{s}_k$. Chaining these two inequalities
yields
\[
\bar{s}_k^N \leq s_k \leq \rho^{-1}\bar{s}_k,
\]
so substituting gives the $\mathcal{F}_{k+1}$-measurable
bound~\eqref{eq:corollary_bound} with $c_r =
S_N\|\mathbf{W}_r^{-1}\|/\rho$ and $c_0 = (L_f^N + S_N)\bar{v}$.

\textit{Step~4 (Sub-Gaussian tail of $\boldsymbol{\xi}_{k+N}$).}
Since $\boldsymbol{\xi}_{k+N} = \sum_{t=0}^{N-1}
L_f^{N-1-t}\mathbf{w}_{k+t}$ is a weighted sum of independent
sub-Gaussian disturbances with squared coefficient sum
$\sum_{t=0}^{N-1}L_f^{2t} = (L_f^{2N}-1)/(L_f^2-1)$, it is
sub-Gaussian with the stated variance proxy. Independence from
$\mathcal{F}_{k+1}$ follows from Assumption~\ref{ass:subgaussian}.
\end{proof}

\section{Residual-Conservative MPPI}
RC-MPPI augments vanilla MPPI with three residual-driven mechanisms
that tighten safety and reduce model-confidence as execution mismatch
grows. Algorithm~\ref{alg:rcmppi} summarizes the per-step procedure.

\begin{algorithm}[H]
\caption{Residual-Conservative MPPI (RC-MPPI)}
\label{alg:rcmppi}
\begin{algorithmic}[1]
\STATE Measure $\mathbf{y}_k$; compute $\mathbf{r}_k$
       via~\eqref{eq:residual_def}; update $\bar{s}_k$
       via~\eqref{eqn:fil_statistics}
\STATE Compute $m(\bar{s}_k)$, $\alpha_k$, $\varsigma_k$, $\beta_k$
       via~\eqref{eq:tightening_margin}--\eqref{eq:sampling_modulation}
\STATE Sample $\boldsymbol{\epsilon}^{(i)} \sim \mathcal{N}(0,
       \varsigma_k^2 \mathbf{I})$; roll out~\eqref{eq:nominal_dynamics}
\STATE Evaluate $Z^{(i)}$ with barrier
       $\alpha_k \phi(h(\mathbf{x}) + m(\bar{s}_k))$
\STATE Compute weights~\eqref{eq:mppi_weights} at temperature $\beta_k$;
       update $\mathbf{u}_k$ via~\eqref{eq:mppi_update}; execute
       $\mathbf{u}_k$
\end{algorithmic}
\end{algorithm}

Step~1 converts the raw measurement into a scalar mismatch signal
$\bar{s}_k$. Step~2 translates that signal into three coupled
residual-adaptive mechanisms: (i) residual-dependent barrier
tightening $m(\bar{s}_k)$ that deforms the effective safe set;
(ii) amplified barrier penalty $\alpha_k$ that strengthens the cost
signal against unsafe rollouts; and (iii) residual-adaptive
temperature relaxation $(\varsigma_k, \beta_k)$ that contracts
exploration and softens importance weights to reflect reduced
confidence in cost evaluations under an inaccurate model.
Steps~3--5 are standard MPPI rollout and update, now operating under
the residual-adapted cost and sampling distribution. When $\bar{s}_k
\approx 0$ all three modulations vanish and RC-MPPI reduces to
vanilla MPPI.

\subsection{Risk-Sensitive MPPI}

MPPI minimizes the entropic (free-energy) cost functional
\begin{equation}\label{eq:entropic_obj}
J_\beta(\mathbf{U})
=-\beta\log\mathbb{E}\!\left[
\exp\!\left(-\frac{Z(\mathbf{U})}{\beta}\right)
\right],
\end{equation}
where $\beta > 0$ is the \emph{temperature}. Larger $\beta$ reduces
sensitivity to cost differences across rollouts, producing a more
uniform weight distribution; smaller $\beta$ sharpens the weights
onto the lowest-cost rollout.

In RC-MPPI, $\beta$ encodes \emph{confidence in rollout cost
evaluations}. When the model is accurate, small $\beta$ concentrates
mass on the genuinely best rollout. When the model is wrong, cost
evaluations are unreliable, and large $\beta$ prevents
over-commitment to a rollout that merely \emph{appears} optimal under
the mismatched model. This interpretation is formalized in
Proposition~\ref{prop:temp_sensitivity}, which shows that the
sensitivity of importance weights to rollout-cost uncertainty scales
as $2C_\Delta \bar{s}_k / \beta_k$, providing theoretical
justification for raising $\beta_k$ as model-plant mismatch grows.

\paragraph{Trajectory cost.}
\begin{equation}\label{eq:traj_cost}
\begin{aligned}
Z(\mathbf{U}) = &\sum_{t=0}^{N-1}\!\Big(
\ell_{\mathrm{trk}}(\mathbf{x}_{k+t})
+ \ell_u(\mathbf{u}_{k+t})
+ \ell_{\mathrm{safe}}(\mathbf{x}_{k+t};\bar{s}_k)
\Big) \\
&+ \ell_f(\mathbf{x}_{k+N}),
\end{aligned}
\end{equation}
where $\ell_{\mathrm{trk}}(\mathbf{x}) = \|\mathbf{x} -
\mathbf{x}^{\mathrm{ref}}\|_{\mathbf{Q}}^2$ with positive definite
weight matrices $\mathbf{Q}, \mathbf{Q}_f \succ 0$,
$\ell_u(\mathbf{u}) = \|\mathbf{u}\|_{\mathbf{R}}^2$ with $\mathbf{R}
\succ 0$,
\begin{equation}\label{eq:barrier_cost}
\ell_{\mathrm{safe}}(\mathbf{x};\bar{s}_k)
= \alpha_k\,\phi\!\left(h(\mathbf{x}) + m(\bar{s}_k)\right),
\end{equation}
$\phi(z) = \max(0,z)^2$, $m(\bar{s}_k)$ is the residual-dependent
tightening margin defined in~\eqref{eq:tightening_margin}, and
$\ell_f(\mathbf{x}) = \|\mathbf{x} -
\mathbf{x}^{\mathrm{ref}}_{k+N}\|_{\mathbf{Q}_f}^2$.

\paragraph{Importance sampling update.}
Perturbations $\boldsymbol{\epsilon}^{(i)} \sim \mathcal{N}(0,
\varsigma_k^2 \mathbf{I})$ form sequences $\mathbf{U}^{(i)} =
\mathbf{U} + \boldsymbol{\epsilon}^{(i)}$ with costs $Z^{(i)} :=
Z(\mathbf{U}^{(i)})$.
\begin{align}
w^{(i)} &= \frac{\exp(-Z^{(i)}/\beta_k)}
               {\sum_j \exp(-Z^{(j)}/\beta_k)},
\label{eq:mppi_weights}\\
\mathbf{u}_k &\leftarrow \mathbf{u}_k
+ \sum_{i=1}^K w^{(i)} \boldsymbol{\epsilon}^{(i)}_0.
\label{eq:mppi_update}
\end{align}

\subsection{Residual-Conservative Barrier Modulation}

\paragraph{Residual-dependent tightening.}
\begin{equation}\label{eq:tightening_margin}
m(\bar{s}_k) := L_h(c_r \bar{s}_k + c_0),
\end{equation}
where $c_r$ and $c_0$ are the explicit horizon-dependent constants
from Theorem~\ref{thm:horizon_bound}: $c_r = S_N
\|\mathbf{W}_r^{-1}\|/\rho$ and $c_0 = (L_f^N + S_N)\bar{v}$. The
margin thus scales with horizon $N$ and Lipschitz constant $L_f$:
longer horizons or more expansive dynamics require greater tightening,
as expected.

\begin{remark}[Implementation]
For $h(\mathbf{x}) = r - d(\mathbf{x})$, radius inflation
$r_{\mathrm{eff}} = r + \mathrm{clip}(\kappa_r \bar{s}_k, 0,
\Delta r_{\max})$ implements tightening since $h(\mathbf{x}) +
m(\bar{s}_k) = (r + m(\bar{s}_k)) - d(\mathbf{x})$, where
$\mathrm{clip}(v, 0, \Delta r_{\max}) := \min(\max(v, 0),
\Delta r_{\max})$ saturates the inflation to the interval
$[0, \Delta r_{\max}]$.
\end{remark}

\paragraph{Residual-aware penalty scaling.}
\begin{equation}\label{eq:penalty_scaling}
\alpha_k = \alpha_0(1 + \gamma \bar{s}_k).
\end{equation}

\paragraph{Residual-adaptive sampling modulation.}
\begin{align}\label{eq:sampling_modulation}
\varsigma_k &= \mathrm{clip}\!\left(
  \frac{\varsigma_0}{1 + \kappa_\varsigma \bar{s}_k},
  \varsigma_{\min}, \varsigma_{\max}\right), \nonumber\\
\beta_k &= \mathrm{clip}\!\left(
  \beta_0(1 + \kappa_\beta \bar{s}_k),
  \beta_{\min}, \beta_{\max}\right).
\end{align}
As $\bar{s}_k$ increases: $\varsigma_k \downarrow$ contracts
exploration; $\beta_k \uparrow$ softens importance weights to reflect
reduced model confidence.

\begin{remark}[Complementary roles of $\alpha_k$ and $\beta_k$]
$\alpha_k$ amplifies the barrier signal so unsafe rollouts incur
large cost regardless of temperature. $\beta_k$ controls exploitation
of that signal. Although $\beta_k \uparrow$ softens weights, the
barrier cost $\alpha_k \phi(m(\bar{s}_k))$ grows as
$O(\bar{s}_k^2)$ while $\beta_k$ grows as $O(\bar{s}_k)$: their
ratio diverges, and unsafe rollouts receive asymptotically zero
weight despite the rising temperature
(Lemma~\ref{lem:temperature_averaging}(ii)). The performance
advantage of $\beta \uparrow$ over $\beta \downarrow$ under mismatch
is established in Proposition~\ref{prop:temp_sensitivity}.
\end{remark}

\section{Residual-Adaptive Safety Analysis}

Terminal constraint satisfaction is a standard analysis paradigm in
receding-horizon control~\cite{rawlings2017mpcbook}, and
probabilistic formulations are well-established in stochastic and
chance-constrained
MPC~\cite{mesbah2016smc,blackmore2011chance}. We adopt this framework
to bound the probability that the nominal terminal prediction violates
the safety constraint $h(\mathbf{x}_{k+N}) \leq 0$, and show that
all three RC-MPPI mechanisms jointly reduce this probability as
model-plant mismatch grows. Proposition~\ref{prop:terminal_safety_bound}
establishes a baseline from constraint tightening.
Lemma~\ref{lem:temperature_averaging} and
Lemma~\ref{lem:sampling_concentration} characterize temperature
adaptation and trajectory concentration.
Proposition~\ref{prop:joint_safety} gives the joint bound, and
Corollary~\ref{cor:dominance} establishes that RC-MPPI achieves at
least the constraint satisfaction probability of vanilla MPPI, with
strict improvement whenever $\bar{s}_k > 0$. Separately,
Proposition~\ref{prop:temp_sensitivity} establishes that raising
$\beta_k$ reduces the sensitivity of importance weights to
rollout-cost uncertainty induced by model-plant mismatch, providing
theoretical justification for the residual-adaptive temperature rule.

\begin{proposition}[Terminal Safety Bound]
\label{prop:terminal_safety_bound}
Under Assumptions~\ref{ass:lipschitz}--\ref{ass:subgaussian}, let
$d_{\mathrm{safe}} := -h(\hat{\mathbf{x}}_{k+N}) > 0$ denote the
nominal terminal safety margin. Applying
Theorem~\ref{thm:horizon_bound} to bound the terminal prediction
error, if $d_{\mathrm{safe}} \geq m(\bar{s}_k)$, then
\begin{equation}\label{eq:terminal_prob_bound}
\mathbb{P}(h(\mathbf{x}_{k+N}) > 0 \mid \mathcal{F}_{k+1})
\leq c_1 \exp\!\left(
-\frac{c_2(d_{\mathrm{safe}} - m(\bar{s}_k))^2}{\sigma_x^2}
\right),
\end{equation}
where $c_1 := a_1$ and $c_2 := a_2 / L_h^2$.
\end{proposition}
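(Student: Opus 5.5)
The plan is to reduce the violation event to a tail event for the aggregated disturbance $\|\boldsymbol{\xi}_{k+N}\|$, and then apply the sub-Gaussian bound.

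\textbf{Step 1 (Lipschitz transfer).} Since $h$ is $L_h$-Lipschitz,
\[
h(\mathbf{x}_{k+N}) \leq h(\hat{\mathbf{x}}_{k+N}) + L_h\|\mathbf{e}_{k+N}\| = -d_{\mathrm{safe}} + L_h\|\mathbf{e}_{k+N}\|.
\]
Theorem~\ref{thm:horizon_bound} then gives, conditioned on $\mathcal{F}_{k+1}$,
\[
h(\mathbf{x}_{k+N}) \leq -d_{\mathrm{safe}} + L_h(c_r\bar{s}_k + c_0) + L_h\|\boldsymbol{\xi}_{k+N}\| = -(d_{\mathrm{safe}} - m(\bar{s}_k)) + L_h\|\boldsymbol{\xi}_{k+N}\|.
\]
Here I use the definition~\eqref{eq:tightening_margin} of $m(\bar{s}_k)$. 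Hence
\[
\{h(\mathbf{x}_{k+N})>0\} \subseteq \bigl\{\|\boldsymbol{\xi}_{k+N}\| > (d_{\mathrm{safe}}-m(\bar{s}_k))/L_h\bigr\}.
\]
The threshold $t:=(d_{\mathrm{safe}}-m(\bar{s}_k))/L_h$ is $\mathcal{F}_{k+1}$-measurable and nonnegative by the hypothesis $d_{\mathrm{safe}}\geq m(\bar{s}_k)$.

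\textbf{Step 2 (conditioning).} Theorem~\ref{thm:horizon_bound} and Assumption~\ref{ass:subgaussian} make $\boldsymbol{\xi}_{k+N}$ independent of $\mathcal{F}_{k+1}$. The conditional probability of the event above is therefore the unconditional tail $\mathbb{P}(\|\boldsymbol{\xi}_{k+N}\|\geq t)$ evaluated at the frozen value of $t$. This is the standard freezing lemma for independent variables.

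\textbf{Step 3 (tail bound).} Apply a sub-Gaussian tail inequality of the form in Assumption~\ref{ass:subgaussian} to $\boldsymbol{\xi}_{k+N}$. This yields $a_1\exp(-a_2 t^2/\sigma_\xi^2)$, and substituting $t$ produces the factor $1/L_h^2$, i.e.\ $c_2=a_2/L_h^2$.

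\textbf{Main obstacle.} The difficulty is matching the stated exponent exactly. Theorem~\ref{thm:horizon_bound} gives $\boldsymbol{\xi}_{k+N}$ the variance proxy $\sigma_\xi^2=\sigma_x^2(L_f^{2N}-1)/(L_f^2-1)$, not $\sigma_x^2$. Moreover, a weighted sum of sub-Gaussian vectors obeys a tail bound of the assumed form only with possibly modified constants $a_1,a_2$.

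To handle this I would first read the sub-Gaussian tail hypothesis as holding for $\boldsymbol{\xi}_{k+N}$ with the same constants $a_1,a_2$, which is how Theorem~\ref{thm:horizon_bound} phrases it. The horizon factor $(L_f^{2N}-1)/(L_f^2-1)$ can then be absorbed into $\sigma_x^2$, i.e.\ $\sigma_x^2$ is read as the horizon-aggregated proxy. Alternatively, it can be folded into $c_2$, giving $c_2=a_2(L_f^2-1)/(L_h^2(L_f^{2N}-1))$.

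I would state this identification explicitly. All other steps are direct.
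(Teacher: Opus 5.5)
This is the paper's proof step for step: Lipschitz transfer, then Theorem~\ref{thm:horizon_bound} with the definition of $m(\bar{s}_k)$ to get $\{h(\mathbf{x}_{k+N})>0\}\subseteq\{L_h\|\boldsymbol{\xi}_{k+N}\|>d_{\mathrm{safe}}-m(\bar{s}_k)\}$, then a conditional tail bound at $t=(d_{\mathrm{safe}}-m(\bar{s}_k))/L_h$. The obstacle you flag is real and the paper does not resolve it: it applies Assumption~\ref{ass:subgaussian}'s bound, stated for one $\mathbf{w}_k$ with proxy $\sigma_x^2$, directly to $\boldsymbol{\xi}_{k+N}$, whose proxy is $\sigma_x^2\sum_{t=0}^{N-1}L_f^{2t}$ (equal to $\sigma_x^2$ only for $N=1$), so your explicit identification, or folding that sum into $c_2$, is the right repair; likewise, credit the independence in your Step~2 to Theorem~\ref{thm:horizon_bound} alone, since under the paper's definition $\mathbf{w}_k$ is itself $\mathcal{F}_{k+1}$-measurable and Assumption~\ref{ass:subgaussian} cannot supply it.
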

\begin{proof}
Since $h$ is Lipschitz with constant $L_h$,
\[
h(\mathbf{x}_{k+N}) \leq h(\hat{\mathbf{x}}_{k+N}) + L_h
\|\mathbf{e}_{k+N}\| = -d_{\mathrm{safe}} +
L_h\|\mathbf{e}_{k+N}\|,
\]
so $\{h(\mathbf{x}_{k+N}) > 0\} \subseteq
\{L_h\|\mathbf{e}_{k+N}\| > d_{\mathrm{safe}}\}$. By
Theorem~\ref{thm:horizon_bound}, $\|\mathbf{e}_{k+N}\|
\leq c_r\bar{s}_k + c_0 + \|\boldsymbol{\xi}_{k+N}\|$, so
multiplying by $L_h$ and using $m(\bar{s}_k) = L_h(c_r\bar{s}_k +
c_0)$ gives $L_h\|\mathbf{e}_{k+N}\| \leq m(\bar{s}_k) +
L_h\|\boldsymbol{\xi}_{k+N}\|$. Therefore,
\[
\{h(\mathbf{x}_{k+N}) > 0\} \subseteq
\{L_h\|\boldsymbol{\xi}_{k+N}\| > d_{\mathrm{safe}} -
m(\bar{s}_k)\}.
\]
Since $\boldsymbol{\xi}_{k+N}$ is independent of $\bar{s}_k$ given
$\mathcal{F}_{k+1}$ by Theorem~\ref{thm:horizon_bound},
the sub-Gaussian tail bound of Assumption~\ref{ass:subgaussian}
applies conditionally with $t = (d_{\mathrm{safe}} -
m(\bar{s}_k))/L_h$, yielding~\eqref{eq:terminal_prob_bound}.
\end{proof}

\begin{corollary}[Consistency with Nominal MPPI]
\label{cor:consistency}
The modulation functions~\eqref{eq:tightening_margin}--\eqref{eq:sampling_modulation}
are continuous in $\bar{s}_k$. Consequently,
\begin{align*}
\lim_{\bar{s}_k \to 0} m(\bar{s}_k) &= L_h c_0, &
\lim_{\bar{s}_k \to 0} \alpha_k &= \alpha_0, \\
\lim_{\bar{s}_k \to 0} \beta_k &= \beta_0, &
\lim_{\bar{s}_k \to 0} \varsigma_k &= \varsigma_0,
\end{align*}
and RC-MPPI reduces to standard MPPI with nominal temperature $\beta_0$
and fixed safety shaping.
\end{corollary}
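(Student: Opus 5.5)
The plan is to prove continuity of each of the four modulation maps as functions of $\bar{s}_k \in [0,\infty)$ separately, and then evaluate the limits by direct substitution. Three of the maps are affine in $\bar{s}_k$. The tightening margin~\eqref{eq:tightening_margin}, $m(\bar{s}_k) = L_h c_r \bar{s}_k + L_h c_0$, is affine because $c_r$ and $c_0$ are the fixed planning-time constants of Theorem~\ref{thm:horizon_bound}. They depend only on $N$, $L_f$, $\mathbf{W}_r$, $\rho$ and $\bar{v}$, not on $\bar{s}_k$. The penalty scaling~\eqref{eq:penalty_scaling}, $\alpha_k = \alpha_0 + \alpha_0\gamma\bar{s}_k$, is affine as well. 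Affine maps are continuous, so setting $\bar{s}_k = 0$ gives $L_h c_0$ and $\alpha_0$.

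For the two sampling-modulation maps in~\eqref{eq:sampling_modulation}, I will use the fact that $\mathrm{clip}(\cdot,a,b) = \min(\max(\cdot,a),b)$ is $1$-Lipschitz, hence continuous. I will then compose it with the inner maps $\bar{s}_k \mapsto \varsigma_0/(1+\kappa_\varsigma \bar{s}_k)$ and $\bar{s}_k \mapsto \beta_0(1+\kappa_\beta\bar{s}_k)$. Both inner maps are continuous on $[0,\infty)$: the first has a denominator bounded below by $1$ since $\kappa_\varsigma \geq 0$, and the second is affine. At $\bar{s}_k=0$ the inner values are exactly $\varsigma_0$ and $\beta_0$, so the limits become $\mathrm{clip}(\varsigma_0,\varsigma_{\min},\varsigma_{\max})$ and $\mathrm{clip}(\beta_0,\beta_{\min},\beta_{\max})$.

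The one point needing care is identifying these clipped values with $\varsigma_0$ and $\beta_0$. This requires the standing design convention that the nominal parameters lie inside their clipping intervals, $\varsigma_{\min} \le \varsigma_0 \le \varsigma_{\max}$ and $\beta_{\min}\le\beta_0\le\beta_{\max}$. I will state this convention explicitly as the (implicit) parameter choice under which the corollary holds. Without it the limit would be the clipped nominal value rather than the nominal value itself.

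Finally, I will argue the reduction to standard MPPI. At $\bar{s}_k=0$ the sampling covariance is $\varsigma_0^2\mathbf{I}$ and the weights~\eqref{eq:mppi_weights} use $\beta_0$. The barrier~\eqref{eq:barrier_cost} becomes $\alpha_0\phi(h(\mathbf{x})+L_h c_0)$, which has a constant, residual-independent shift $L_h c_0$ accounting only for bounded measurement noise. Every ingredient of Algorithm~\ref{alg:rcmppi} is therefore the fixed nominal one, which is vanilla MPPI with fixed safety shaping. Continuity of the weights in $\bar{s}_k$, via continuity of $\exp$ and of $Z$ in $(\alpha_k, m)$ for fixed samples, extends this to the limit $\bar{s}_k \to 0$ rather than only the value at $0$. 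I expect no real obstacle beyond stating the clipping-interval convention.
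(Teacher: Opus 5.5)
Your proposal is correct and follows the same route as the paper, whose proof is just continuity of each modulation map plus direct substitution of $\bar{s}_k = 0$ into \eqref{eq:tightening_margin}--\eqref{eq:sampling_modulation}. Your observation that the clipped limits equal $\varsigma_0$ and $\beta_0$ only if $\varsigma_{\min}\le\varsigma_0\le\varsigma_{\max}$ and $\beta_{\min}\le\beta_0\le\beta_{\max}$ is a real precision: the paper's one-line argument relies on these conditions without stating them.
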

\begin{proof}
Direct substitution of $\bar{s}_k = 0$
into~\eqref{eq:tightening_margin}--\eqref{eq:sampling_modulation},
using continuity of each expression in $\bar{s}_k$.
\end{proof}

\subsection{Joint Safety via Adaptive Weighting}

\begin{lemma}[Temperature-Averaging Under Mismatch]
\label{lem:temperature_averaging}
\smallskip\noindent\textup{(i)~[Update variance reduction.]}
The MPPI control update satisfies
\begin{equation}
\label{eq:update_variance}
\mathrm{Var}_w(\mathbf{u}_k)
\leq
\varsigma_k^2
\left(
1 + \frac{\mathrm{diam}(Z)}{\beta_k}
\right).
\end{equation}
Hence the variance bound is nonincreasing in $\beta_k$. Since
RC-MPPI increases $\beta_k$ as the residual indicator $\bar{s}_k$
grows, the control update becomes increasingly averaged, and the
variance bound tightens under larger model-plant mismatch.

\smallskip\noindent\textup{(ii)~[Barrier suppression.]}
Any unsafe rollout ($h(\mathbf{x}^{(i)}_{k+N}) > 0$) satisfies
\begin{equation}\label{eq:unsafe_weight_ratio}
\frac{w^{(i)}_{\mathrm{unsafe}}}{w^{(j)}_{\mathrm{safe}}}
\leq
\exp\!\left(
-\frac{\alpha_k \phi(m(\bar{s}_k))}{\beta_k}
\right).
\end{equation}
Since $\alpha_k \phi(m(\bar{s}_k)) \sim O(\bar{s}_k^2)$ while
$\beta_k \sim O(\bar{s}_k)$, the exponent diverges to $-\infty$:
unsafe rollouts receive asymptotically zero weight.
\end{lemma}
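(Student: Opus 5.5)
For part~(i), I will read $\mathrm{Var}_w(\mathbf{u}_k)$ as the weighted second moment of the update increment $\sum_i w^{(i)}\boldsymbol{\epsilon}^{(i)}_0$ under the MPPI weights~\eqref{eq:mppi_weights}.

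The first step is to bound how far the weights can deviate from uniform. Write $w^{(i)} \propto \exp(-(Z^{(i)}-Z_{\min})/\beta_k)$. Every exponent lies in $[-\mathrm{diam}(Z)/\beta_k, 0]$, where $\mathrm{diam}(Z) := \max_i Z^{(i)} - \min_j Z^{(j)}$. Hence the ratio of any two weights is at most $e^{\mathrm{diam}(Z)/\beta_k}$, and each weight satisfies $w^{(i)} \le K^{-1} e^{\mathrm{diam}(Z)/\beta_k}$.

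The second step compares the weighted second moment with the unweighted one. Since the perturbations are Gaussian with variance $\varsigma_k^2$ per coordinate, I will bound
\[
\sum_i w^{(i)}\|\boldsymbol{\epsilon}^{(i)}_0\|^2 \le \Big(\max_i K w^{(i)}\Big)\,\frac{1}{K}\sum_i \|\boxed{\phantom{x}}\|^2 .
\]
More precisely, the right-hand factor is the empirical average $\frac1K\sum_i\|\boldsymbol{\epsilon}^{(i)}_0\|^2$, which I treat as $\approx \varsigma_k^2$ per coordinate, taken in expectation or after normalizing the dimension. The remaining factor is at most $e^{\mathrm{diam}(Z)/\beta_k}$. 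To reach the stated form $1+\mathrm{diam}(Z)/\beta_k$, I will linearize this factor. This is valid in the regime $\mathrm{diam}(Z)\le\beta_k$, since there $e^x\le 1+(e-1)x$. Alternatively, the bound can be derived from a Jensen/KL argument: $\mathrm{KL}(w\|\mathrm{unif})\le \mathrm{diam}(Z)/\beta_k$, followed by the Donsker--Varadhan change of measure applied to $\|\boldsymbol\epsilon\|^2/\varsigma_k^2$.

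\textbf{Main obstacle.} Getting exactly the factor $1+\mathrm{diam}(Z)/\beta_k$ rather than the exponential is the hardest step. I would first try the Donsker--Varadhan route, which gives
\[
\mathbb{E}_w[X]\le \mathrm{KL}(w\|\mathrm{unif})+\log\mathbb{E}_{\mathrm{unif}}[e^{X}].
\]
Applied to $X=\|\boldsymbol\epsilon\|^2/(c\varsigma_k^2)$ with a suitable constant $c$, this yields an additive $\mathrm{diam}(Z)/\beta_k$ term in units of $\varsigma_k^2$. The constants would need to be absorbed, or the statement read up to constants. Monotonicity in $\beta_k$ is then immediate, because the right-hand side decreases in $\beta_k$ and $\varsigma_k$ is itself nonincreasing in $\bar{s}_k$.

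For part~(ii), I take an unsafe rollout $i$ and a safe rollout $j$ and compute the ratio directly: $w^{(i)}/w^{(j)}=\exp(-(Z^{(i)}-Z^{(j)})/\beta_k)$. The unsafe rollout has $h(\mathbf{x}^{(i)}_{k+N})>0$, so $\phi(h+m)\ge\phi(m(\bar{s}_k))$ by monotonicity of $\phi$ on $[0,\infty)$, and its barrier cost is at least $\alpha_k\phi(m(\bar{s}_k))$. For the safe rollout, $h(\mathbf{x}^{(j)}_t)\le -m(\bar{s}_k)$ gives zero barrier cost, taking safety in the tightened sense. I also need the non-barrier costs of $i$ to be no smaller than those of $j$; absent that, the non-barrier gap has to be absorbed. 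I will state this comparability explicitly, as a "comparable tracking cost" convention, which is the implicit hypothesis here. The ratio is then at most $\exp(-\alpha_k\phi(m)/\beta_k)$.

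Finally, for the asymptotics: $m(\bar{s}_k)$ is affine in $\bar{s}_k$, so $\phi(m)=m^2$ grows as $\bar{s}_k^2$, and $\alpha_k$ grows linearly, making the numerator at least quadratic. Meanwhile $\beta_k\le\beta_{\max}$, or grows at most linearly. The exponent therefore diverges to $-\infty$.
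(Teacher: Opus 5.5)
The genuine gap is in part (i): none of your three routes reaches the factor $1+\mathrm{diam}(Z)/\beta_k$. The weight bound $w^{(i)}\le K^{-1}e^{\mathrm{diam}(Z)/\beta_k}$ gives the factor $e^{\mathrm{diam}(Z)/\beta_k}$. Linearizing gives $1+(e-1)\,\mathrm{diam}(Z)/\beta_k$, and only when $\mathrm{diam}(Z)\le\beta_k$. Donsker--Varadhan gives a linear bound, but with a larger intercept and slope.

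The paper does not derive this step either. It only invokes an unnamed ``standard Gibbs-measure perturbation inequality'' (citing Whittle) asserting $\mathrm{Var}_q\le\mathrm{Var}_p(1+\mathrm{diam}(Z)/\beta_k)$. No argument can close it, because the inequality is false as stated:
\begin{itemize}
\item \emph{Pathwise over the $K$ samples}, it already fails at $\mathrm{diam}(Z)=0$. Equal costs give uniform weights, and the empirical variance of $K\ge 2$ Gaussian draws exceeds $\varsigma_k^2$ with positive probability.
\item \emph{At the population level}, take $n_u=N=1$, $p=\mathcal{N}(0,\varsigma_k^2)$, $q\propto e^{-Z/\beta_k}p$, and $Z=D\,\mathbf{1}\{|\epsilon_0|\le 3\varsigma_k\}$ with $D/\beta_k=8$. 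This mimics a nominal plan that runs into an obstacle, so only large perturbations escape. With $g\sim\mathcal{N}(0,1)$, set $A:=\mathbb{E}[g^2\mathbf{1}\{|g|>3\}]\approx 0.0293$ and $B:=\mathbb{P}(|g|>3)\approx 0.0027$. The tilted variance is $\varsigma_k^2\,[1+(e^8-1)A]/[1+(e^8-1)B]\approx 9.76\,\varsigma_k^2$, which exceeds $9\,\varsigma_k^2$.
\end{itemize}
In fact the worst case over costs of range $D$ grows like $2D/\beta_k$, so slope $1$ is unattainable. What your argument does prove is the pathwise inequality $\sum_i w^{(i)}\|\boldsymbol{\epsilon}^{(i)}_0\|^2\le e^{\mathrm{diam}(Z)/\beta_k}\frac{1}{K}\sum_i\|\boldsymbol{\epsilon}^{(i)}_0\|^2$. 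Hence the expectation is at most $n_u\varsigma_k^2e^{\bar{D}/\beta_k}$ for any deterministic bound $\bar{D}$ on the cost range; you need a deterministic $\bar{D}$ because $\mathrm{diam}(Z)$ is random and correlated with the samples. State that version and drop the linear form. Monotonicity in $\beta_k$ survives.

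Part (ii) follows the paper's route: ratio of Gibbs weights, the bound $\phi(h+m)\ge\phi(m)$, then a growth-rate comparison. Your explicit ``comparable tracking cost'' hypothesis, and your reading of ``safe'' as $h(\mathbf{x}^{(j)}_t)+m(\bar{s}_k)\le 0$, are exactly what the paper's unjustified step $Z^{(i)}\ge Z^{(j)}+\alpha_k\phi(m(\bar{s}_k))$ silently needs. Without them, an unsafe rollout with much smaller tracking cost can outweigh a safe one.

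Two further caveats apply to both you and the paper. First, in \eqref{eq:traj_cost} the barrier is summed only over $t=0,\dots,N-1$. So $h(\mathbf{x}^{(i)}_{k+N})>0$ incurs no barrier cost unless the barrier is also applied at $t=N$. Second, the claim that the part (i) bound tightens as $\bar{s}_k$ grows treats $\mathrm{diam}(Z)$ as fixed. Under your comparability hypothesis, $\mathrm{diam}(Z)\ge\alpha_k\phi(m(\bar{s}_k))$ whenever a safe and an unsafe rollout coexist, so by part (ii) itself $\mathrm{diam}(Z)/\beta_k\to\infty$. Only monotonicity in $\beta_k$ at a fixed cost range survives.

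Your asymptotics are fine. The numerator is actually cubic in $\bar{s}_k$, and clipping at $\beta_{\max}$ only helps.
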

\begin{proof}
\textit{Part~(i).} Following the importance sampling formulation of
MPPI~\cite{williams2015mppi,williams2017tor}, let $p = \mathcal{N}(0,
\varsigma_k^2 \mathbf{I})$ denote the prior sampling distribution
over control perturbations, and let $q$ denote the
importance-weighted posterior induced by the MPPI
weights~\eqref{eq:mppi_weights}. Let $\mathbf{U} :=
\{\mathbf{u}_k, \ldots, \mathbf{u}_{k+N-1}\}$ denote the control
sequence over the planning horizon. A standard Gibbs-measure
perturbation inequality~\cite{whittle1990risk} gives
\begin{equation*}
\begin{aligned}
\mathrm{Var}_q(\mathbf{U})
&\leq
\mathrm{Var}_p(\mathbf{U})
\left(1 + \frac{\mathrm{diam}(Z)}{\beta_k}\right) \\
&=
\varsigma_k^2
\left(1 + \frac{\mathrm{diam}(Z)}{\beta_k}\right),
\end{aligned}
\end{equation*}
where $\mathrm{diam}(Z) := \max_i Z^{(i)} - \min_i Z^{(i)}$ is the
range of rollout costs. The bound is nonincreasing in $\beta_k$.
Since RC-MPPI increases $\beta_k$ as $\bar{s}_k$ grows, the variance
bound tightens under larger model-plant mismatch.

\textit{Part~(ii).} For any unsafe rollout $i$ with
$h(\mathbf{x}^{(i)}_{k+N}) > 0$, the barrier term
$\alpha_k\phi(h(\mathbf{x}^{(i)}_{k+N}) + m(\bar{s}_k))$ is
strictly positive. Since $\phi(z) = \max(0,z)^2$ and
$h(\mathbf{x}^{(i)}_{k+N}) > 0$ implies
$h(\mathbf{x}^{(i)}_{k+N}) + m(\bar{s}_k) > m(\bar{s}_k) > 0$, we
have
\[
Z^{(i)} \geq Z^{(j)} + \alpha_k\phi(m(\bar{s}_k))
\]
for any safe rollout $j$. Substituting into the weight
formula~\eqref{eq:mppi_weights} gives
\[
\frac{w^{(i)}_{\mathrm{unsafe}}}{w^{(j)}_{\mathrm{safe}}}
=
\frac{\exp(-Z^{(i)}/\beta_k)}{\exp(-Z^{(j)}/\beta_k)}
\leq
\exp\!\left(-\frac{\alpha_k\phi(m(\bar{s}_k))}{\beta_k}\right),
\]
which proves~\eqref{eq:unsafe_weight_ratio}. To show this ratio
vanishes as $\bar{s}_k$ grows, note that
\[
\phi(m(\bar{s}_k)) = L_h^2(c_r\bar{s}_k + c_0)^2 \sim
O(\bar{s}_k^2),
\]
while from~\eqref{eq:penalty_scaling}
and~\eqref{eq:sampling_modulation}, $\alpha_k \sim O(\bar{s}_k)$
and $\beta_k \sim O(\bar{s}_k)$. Therefore,
\[
\frac{\alpha_k\phi(m(\bar{s}_k))}{\beta_k}
\sim
\frac{\alpha_0\gamma L_h^2 c_r^2}{\beta_0\kappa_\beta}
\bar{s}_k^2 \rightarrow \infty \quad
\text{as } \bar{s}_k \rightarrow \infty,
\]
so the weight ratio~\eqref{eq:unsafe_weight_ratio} converges to zero
and unsafe rollouts receive asymptotically negligible weight.
\end{proof}

\begin{lemma}[Rollout Concentration]
\label{lem:sampling_concentration}
Let $\mathbf{x}_{k+t}$ denote the unperturbed nominal rollout
obtained by propagating $f_\theta$ from $\hat{\mathbf{x}}_k =
\mathbf{y}_k$ under the mean control sequence $\mathbf{U}$, and let
$\mathbf{x}^{(i)}_{k+t}$ denote the $i$-th Monte Carlo rollout
obtained by propagating $f_\theta$ under the perturbed control
sequence $\mathbf{U}^{(i)} = \mathbf{U} + \boldsymbol{\epsilon}^{(i)}$
with $\boldsymbol{\epsilon}^{(i)} \sim \mathcal{N}(0, \varsigma_k^2
\mathbf{I})$, where $\varsigma_k$ is the residual-adaptive sampling
standard deviation defined in~\eqref{eq:sampling_modulation}. Both
trajectories are initialized from $\hat{\mathbf{x}}_k$ and
propagated under the nominal model $f_\theta$ — the true plant is
not involved. This bound is used in
Proposition~\ref{prop:joint_safety} to quantify the probability that
a perturbed rollout reaches the unsafe region
$\{h(\mathbf{x}^{(i)}_{k+N}) > 0\}$. For any rollout $i$ and step
$t \geq 1$,
\begin{equation}\label{eq:concentration_bound}
\mathbb{P}\!\left(
\|\mathbf{x}^{(i)}_{k+t} - \mathbf{x}_{k+t}\| \geq \delta
\right)
\leq 2n_x \exp\!\left(
-\frac{\delta^2}{2L_f^{2t}\varsigma_k^2}
\right).
\end{equation}
Since RC-MPPI reduces $\varsigma_k$ as $\bar{s}_k$ increases
via~\eqref{eq:sampling_modulation}, the
bound~\eqref{eq:concentration_bound} tightens and the Monte Carlo
rollouts concentrate more tightly around the unperturbed nominal
rollout under larger model-plant mismatch.
\end{lemma}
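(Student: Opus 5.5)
The plan is to propagate the perturbation through the nominal dynamics using Assumption~\ref{ass:lipschitz}, and then apply a Gaussian tail bound. Two issues complicate this. First, Assumption~\ref{ass:lipschitz} only controls the dependence of $f_\theta$ on the state at a fixed control, while the perturbation enters through the control. Second, the stated constant $L_f^{2t}$ carries no extra horizon factor, so the natural route is to treat the perturbation as entering additively at each step with unit input gain, which is exact for the LTI point-mass $f_\theta(\mathbf{x},\mathbf{u}) = \mathbf{A}\mathbf{x} + \mathbf{B}\mathbf{u}$ with $\|\mathbf{B}\| \le 1$. Accordingly, I will additionally assume a control-Lipschitz bound $\|f_\theta(\mathbf{x},\mathbf{u}) - f_\theta(\mathbf{x},\mathbf{u}')\| \le L_u\|\mathbf{u}-\mathbf{u}'\|$ with $L_u \le 1$, as the analogue of Assumption~\ref{ass:lipschitz} in the control argument, and track where it enters.

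\textbf{Step 1 (deviation recursion).} Define $\mathbf{d}_t := \mathbf{x}^{(i)}_{k+t} - \mathbf{x}_{k+t}$, so that $\mathbf{d}_0 = 0$ because both rollouts start at $\hat{\mathbf{x}}_k = \mathbf{y}_k$. Inserting the intermediate point $f_\theta(\mathbf{x}_{k+t}, \mathbf{u}_{k+t}+\boldsymbol{\epsilon}^{(i)}_t)$ and using the triangle inequality gives
\[
\|\mathbf{d}_{t+1}\| \le L_f\|\mathbf{d}_t\| + L_u\|\boldsymbol{\epsilon}^{(i)}_t\|.
\]
Unrolling from $\mathbf{d}_0 = 0$ yields $\|\mathbf{d}_t\| \le \sum_{s=0}^{t-1} L_f^{t-1-s}\|\boldsymbol{\epsilon}^{(i)}_s\|$. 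This is deterministic and does not involve the true plant, consistent with the statement.

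\textbf{Step 2 (Gaussian tail).} In the linear case, $\mathbf{d}_t = \sum_s \mathbf{A}^{t-1-s}\mathbf{B}\boldsymbol{\epsilon}^{(i)}_s$ is exactly Gaussian with zero mean, and each coordinate has variance at most $\varsigma_k^2\sum_{s<t} L_f^{2(t-1-s)}$. I will bound this coefficient sum by $L_f^{2t}$. That is immediate when $L_f \ge 1$ and the sum is dominated by a single term with a suitable normalization, but it is not true in general. I will then apply a union bound over the $n_x$ coordinates: $\|\mathbf{d}_t\| \ge \delta$ forces some coordinate to satisfy $|d_{t,j}| \ge \delta/\sqrt{n_x}$. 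Combining this with the two-sided scalar Gaussian tail $2\exp(-a^2/(2v))$ gives the factor $2n_x$. In the nonlinear case I will instead treat the scalar variable $\sum_s L_f^{t-1-s}\|\boldsymbol{\epsilon}_s\|$ through a sub-Gaussian norm-concentration argument.

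\textbf{Main obstacle.} The hardest step is reconciling the constants. A careful computation produces the variance proxy $\varsigma_k^2 L_u^2 S_t^{(2)}$ with $S_t^{(2)} = \sum_{s<t} L_f^{2s}$, and the union bound introduces a $\sqrt{n_x}$ in $\delta$. Neither of these matches $\delta^2/(2L_f^{2t}\varsigma_k^2)$ exactly. My first attempt will be to absorb both into the convention that $\boldsymbol{\epsilon}^{(i)}$ perturbs only the first control, or equivalently that the relevant Gaussian enters once and is amplified by $L_f^t$. Under that convention $\mathbf{d}_t$ is a $\varsigma_k L_f^{t}$-scaled Gaussian vector, and coordinatewise union bounds give exactly the stated form. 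If that convention cannot be justified, I would state the bound with the summed variance proxy, which is still monotone in $\varsigma_k$.

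\textbf{Monotonicity.} The final claim, that the bound tightens as $\bar{s}_k$ increases, follows directly. The right-hand side of \eqref{eq:concentration_bound} is increasing in $\varsigma_k$, and $\varsigma_k$ is nonincreasing in $\bar{s}_k$ by \eqref{eq:sampling_modulation}, since the clip of a decreasing function is nonincreasing.
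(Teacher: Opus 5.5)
Your route is the paper's route: a Lipschitz deviation recursion from $\mathbf{d}_0=0$, unrolling, a Gaussian covariance bound, and a union bound over the $n_x$ coordinates. The paper's proof, however, simply asserts each step you flag as an obstacle. It writes $\|\boldsymbol{\delta}^{(i)}_{t+1}\|\le L_f\|\boldsymbol{\delta}^{(i)}_t\|+\|\boldsymbol{\epsilon}^{(i)}_{k+t}\|$ ``under Assumption~\ref{ass:lipschitz}'', which tacitly assumes your unit input gain $L_u\le 1$. It then turns this norm inequality into the vector identity $\boldsymbol{\delta}^{(i)}_t=\sum_{\tau}L_f^{t-1-\tau}\boldsymbol{\epsilon}^{(i)}_{k+\tau}$, asserts $\sum_{\tau=0}^{t-1}L_f^{2(t-1-\tau)}\le L_f^{2t}$, and applies the coordinatewise union bound with no $\sqrt{n_x}$ loss. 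Your doubts are justified. They show that \eqref{eq:concentration_bound} is false under the stated hypotheses, so no argument, yours or the paper's, can establish it. Take $f_\theta(\mathbf{x},\mathbf{u})=\mathbf{x}+\mathbf{u}$, for which $L_f=1$ and the paper's identity holds exactly. With $n_x=n_u=1$ and $t=2$, the deviation is $\mathcal{N}(0,2\varsigma_k^2)$. The probability that it exceeds $4\varsigma_k$ in magnitude is about $4.7\times 10^{-3}$, whereas the right-hand side of \eqref{eq:concentration_bound} is $2e^{-8}\approx 6.7\times 10^{-4}$; the culprit is $\sum_{\tau<t}L_f^{2\tau}=t>1=L_f^{2t}$. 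With $n_x=n_u=4$ and $t=1$, $\|\mathbf{d}_1\|^2/\varsigma_k^2$ is $\chi^2_4$, so $\mathbb{P}(\|\mathbf{d}_1\|\ge 4\varsigma_k)=9e^{-8}>8e^{-8}$. Here the variance proxy is exact, and only the missing factor $n_x$ in the exponent breaks the bound.

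Consequently your ``first control only'' convention is not a legitimate repair. It changes the sampling model: the statement perturbs the whole sequence, and the paper's proof uses a fresh $\boldsymbol{\epsilon}^{(i)}_{k+\tau}$ at every step. It also does not cure the dimension problem. The $t=1$ example is already a single-perturbation instance, and under your convention with $\mathbf{A}=\mathbf{B}=\mathbf{I}$ the same violation occurs at every $t$. In particular, coordinatewise union bounds do not ``give exactly the stated form''; they give $2n_x\exp(-\delta^2/(2n_xL_f^{2t}\varsigma_k^2))$.

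The right deliverable is your fallback with explicit constants. In the linear case, $\mathbb{P}(\|\mathbf{d}_t\|\ge\delta)\le 2n_x\exp\bigl(-\delta^2/(2n_xL_u^2S^{(2)}_t\varsigma_k^2)\bigr)$ with $S^{(2)}_t=\sum_{s<t}L_f^{2s}$. In the nonlinear case, $\mathbf{d}_t$ is not Gaussian, and your scalar surrogate $\sum_s L_f^{t-1-s}\|\boldsymbol{\epsilon}_s\|$ has positive mean, so no zero-offset bound of the stated shape can come out of it. Instead, apply your Step~1 recursion to two perturbed rollouts and use Cauchy--Schwarz: this shows $\boldsymbol{\epsilon}^{(i)}\mapsto\|\mathbf{d}_t\|$ is $L_u\sqrt{S^{(2)}_t}$-Lipschitz. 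Gaussian concentration then gives $\mathbb{P}(\|\mathbf{d}_t\|\ge\mathbb{E}\|\mathbf{d}_t\|+r)\le\exp\bigl(-r^2/(2L_u^2S^{(2)}_t\varsigma_k^2)\bigr)$ with $\mathbb{E}\|\mathbf{d}_t\|\le L_u\varsigma_k\sqrt{t\,n_uS^{(2)}_t}$. Both bounds are increasing in $\varsigma_k$, so your (correct) monotonicity argument and the qualitative use in Proposition~\ref{prop:joint_safety} survive. However, the term $d_{\mathrm{safe}}^2/(2L_f^{2N}\varsigma_k^2)$ in $\Gamma$ must be replaced by the corrected exponent.
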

\begin{proof}
Let $\boldsymbol{\delta}_t^{(i)} := \mathbf{x}_{k+t}^{(i)} -
\mathbf{x}_{k+t}$ denote the deviation between the $i$-th Monte
Carlo rollout and the unperturbed nominal rollout, both propagated
under $f_\theta$ from the same initial state $\hat{\mathbf{x}}_k =
\mathbf{y}_k$. Since both trajectories are initialized from the same
state, $\boldsymbol{\delta}_0^{(i)} = 0$. The deviation between the
perturbed and unperturbed rollouts evolves under
Assumption~\ref{ass:lipschitz} as
\[
\|\boldsymbol{\delta}_{t+1}^{(i)}\|
\leq
L_f \|\boldsymbol{\delta}_t^{(i)}\|
+
\|\boldsymbol{\epsilon}_{k+t}^{(i)}\|,
\]
where $\boldsymbol{\epsilon}_{k+t}^{(i)} \sim \mathcal{N}(0,
\varsigma_k^2 \mathbf{I})$ is the MPPI sampling perturbation applied
at step $t$ of rollout $i$. Note that no true plant dynamics appear
here — both trajectories follow $f_\theta$, so the deviation is
driven purely by the control perturbations
$\boldsymbol{\epsilon}^{(i)}$. Recursively unrolling the inequality
yields
\[
\boldsymbol{\delta}_t^{(i)}
=
\sum_{\tau=0}^{t-1}
L_f^{t-1-\tau}
\boldsymbol{\epsilon}_{k+\tau}^{(i)}.
\]
Since $\boldsymbol{\delta}_t^{(i)}$ is a linear combination of
independent Gaussian perturbations, it is itself Gaussian with
covariance bounded by
\[
\boldsymbol{\Sigma}_t
\preceq
\varsigma_k^2
\left(
\sum_{\tau=0}^{t-1}
L_f^{2(t-1-\tau)}
\right)\mathbf{I}
\leq
L_f^{2t} \varsigma_k^2 \mathbf{I}.
\]
Therefore each coordinate of $\boldsymbol{\delta}_t^{(i)}$ is
sub-Gaussian with variance proxy at most $L_f^{2t} \varsigma_k^2$.
Applying the standard sub-Gaussian tail inequality and a union bound
over the $n_x$ state coordinates gives
\[
\mathbb{P}\!\left(
\|\mathbf{x}_{k+t}^{(i)} - \mathbf{x}_{k+t}\| \geq \delta
\right)
\leq
2n_x \exp\!\left(
-\frac{\delta^2}{2L_f^{2t} \varsigma_k^2}
\right),
\]
which proves~\eqref{eq:concentration_bound}.
\end{proof}

\begin{proposition}[Joint Adaptive Safety Bound]
\label{prop:joint_safety}
Under Assumptions~\ref{ass:lipschitz}--\ref{ass:nominal_competence},
with $d_{\mathrm{safe}} \geq m(\bar{s}_k)$ and applying
Theorem~\ref{thm:horizon_bound} to bound the terminal prediction
error,
\begin{equation}
\label{eq:joint_safety_bound}
\begin{aligned}
&\mathbb{P}\!\left(
h(\mathbf{x}_{k+N})>0
\,\middle|\,
\mathcal{F}_{k+1}
\right) \\
&\quad\leq
c_1
\exp\!\left(
-\frac{
c_2\bigl(d_{\mathrm{safe}}-m(\bar{s}_k)\bigr)^2
}{
\sigma_x^2
}
\right)
\Gamma(\bar{s}_k),
\end{aligned}
\end{equation}
where
\[
\Gamma(\bar{s}_k) := \exp\!\left(
-\frac{\alpha_k\phi(m(\bar{s}_k))}{\beta_k}
-\frac{d_{\mathrm{safe}}^2}{2L_f^{2N}\varsigma_k^2}
\right) \in (0,1]
\]
is nonincreasing in $\bar{s}_k$.
\end{proposition}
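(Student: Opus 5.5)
We would decompose the event $\{h(\mathbf{x}_{k+N})>0\}$ according to which mechanism is responsible for an unsafe realized terminal state, and then combine the three bounds already established. The first factor comes directly from Proposition~\ref{prop:terminal_safety_bound}. Given $\mathcal{F}_{k+1}$ and the executed (weighted) control sequence, Theorem~\ref{thm:horizon_bound} and the Lipschitz property of $h$ confine violation to the tail event $\{L_h\|\boldsymbol{\xi}_{k+N}\| > d_{\mathrm{safe}}-m(\bar{s}_k)\}$. Its probability is at most $c_1\exp(-c_2(d_{\mathrm{safe}}-m(\bar{s}_k))^2/\sigma_x^2)$. The remaining task is to show that the planned terminal state $\hat{\mathbf{x}}_{k+N}$ has margin $d_{\mathrm{safe}}$ only with a probability that carries the extra factor $\Gamma(\bar{s}_k)$.

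\textbf{Step 1 (barrier suppression).} We write the MPPI-weighted update as a mixture over rollouts, with unsafe rollouts having total weight $W_{\mathrm{unsafe}}$. By Lemma~\ref{lem:temperature_averaging}(ii), each unsafe-to-safe weight ratio is bounded by $\exp(-\alpha_k\phi(m(\bar{s}_k))/\beta_k)$. Assumption~\ref{ass:nominal_competence} guarantees that a safe rollout (the prior mean $\bar{\mathbf{U}}$) exists with probability at least $1-\delta_0$, so the relative mass assigned to unsafe directions is controlled by the first exponential in $\Gamma$.

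\textbf{Step 2 (rollout concentration).} We apply Lemma~\ref{lem:sampling_concentration} with $t=N$ and $\delta=d_{\mathrm{safe}}$. For a perturbed rollout to leave the margin, it must deviate by at least $d_{\mathrm{safe}}$ from the nominal rollout. This happens with probability at most $2n_x\exp(-d_{\mathrm{safe}}^2/(2L_f^{2N}\varsigma_k^2))$, which supplies the second exponential.

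\textbf{Step 3 (combining the three bounds).} The sampling noise $\boldsymbol{\epsilon}^{(i)}$ is generated at planning time and is independent of the future plant disturbance $\boldsymbol{\xi}_{k+N}$ given $\mathcal{F}_{k+1}$. We would therefore multiply the conditional bounds. The constants $2n_x$ and $1/(1-\delta_0)$ would be absorbed into $c_1$, or the concentration and suppression factors would be treated as multiplicative reductions relative to the vanilla baseline.

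\textbf{Step 4 (properties of $\Gamma$).} We would verify $\Gamma\in(0,1]$ from the nonnegativity of both exponents. For monotonicity, $\alpha_k\phi(m(\bar{s}_k))/\beta_k$ is nondecreasing in $\bar{s}_k$: the numerator is a product of a linear and a quadratic increasing function, while $\beta_k$ is at most linear and is clipped. This should be checked via the derivative of $(1+\gamma s)(c_r s+c_0)^2/(1+\kappa_\beta s)$, which is positive, together with the constant regime beyond the clip. The term $1/\varsigma_k^2$ is nondecreasing because $\varsigma_k$ is nonincreasing in $\bar{s}_k$.

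\textbf{Main obstacle.} The hardest step is Step 3: justifying a product rather than a sum or a minimum. Steps 1 and 2 bound the probability that the \emph{planned} trajectory lands near the boundary, while the sub-Gaussian factor bounds plant deviation given the plan. Strictly, the union of these events yields an additive bound. I would first try to condition on the event that the weighted plan attains margin $d_{\mathrm{safe}}$, and treat $\Gamma$ as the factor by which RC-MPPI shrinks the probability mass of near-boundary plans relative to vanilla MPPI. This is an idealized independence argument, and the constants would need to be absorbed to make it formally precise.
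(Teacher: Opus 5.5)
\emph{Your decomposition is the paper's own, and Step~3 is a genuine gap that cannot be closed; Step~4 has a second, independent gap.}

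The paper's proof does what you propose. It takes the baseline tail from Proposition~\ref{prop:terminal_safety_bound}. It reads Lemma~\ref{lem:temperature_averaging}(ii) as a bound $\exp(-\alpha_k\phi(m(\bar{s}_k))/\beta_k)$ on the total unsafe weight. It takes $2n_x\exp(-d_{\mathrm{safe}}^2/(2L_f^{2N}\varsigma_k^2))$ from Lemma~\ref{lem:sampling_concentration}, absorbs $2n_x$ into $c_1$, and simply multiplies. So the obstacle you flag in Step~3 is not resolved there either.

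Once the executed plan is fixed (hence $\hat{\mathbf{x}}_{k+N}$ and $d_{\mathrm{safe}}$), the argument of Proposition~\ref{prop:terminal_safety_bound} already gives $\{h(\mathbf{x}_{k+N})>0\}\subseteq\{L_h\|\boldsymbol{\xi}_{k+N}\|>d_{\mathrm{safe}}-m(\bar{s}_k)\}$. Neither the perturbations $\boldsymbol{\epsilon}^{(i)}$ nor the weights $w^{(i)}$ enter this event. Independence of $\boldsymbol{\epsilon}^{(i)}$ and $\boldsymbol{\xi}_{k+N}$ would let you factor the probability of an \emph{intersection}. But a violation does not require any planner-side event, such as an unsafe rollout carrying weight or a sampled rollout straying by $d_{\mathrm{safe}}$: a safe, tightly concentrated plan can be violated by plant noise alone. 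Steps~1--2 also bound weights and rollout deviations, not probabilities of sub-events of the violation. Moreover, summing the pairwise ratio of Lemma~\ref{lem:temperature_averaging}(ii) over unsafe rollouts costs a factor $K_{\mathrm{unsafe}}/K_{\mathrm{safe}}$. Your fallback of conditioning on the plan's margin yields, via total probability, only an \emph{additive} bound $\mathbb{P}(d_{\mathrm{safe}}<d)+c_1\exp(-c_2(d-m(\bar{s}_k))^2/\sigma_x^2)$, with $d_{\mathrm{safe}}$ random through the sampler.

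In fact the product inequality is false. Take $N=1$, scalar $f_\theta=\tilde f$ with $f_\theta(x,u)=x+u$, $\mathbf{v}_k\equiv 0$, $\mathbf{w}_k\sim\mathcal{N}(0,\sigma^2)$ (so $a_1=2$, $a_2=1/2$, $\sigma_x=\sigma$), and $h(x)=x-b$. Given the planning-time information and the plan, the violation probability is exactly $\mathbb{P}(\mathbf{w}_k>d_{\mathrm{safe}})$. This is a Gaussian tail of order $(\sigma/d_{\mathrm{safe}})e^{-d_{\mathrm{safe}}^2/(2\sigma^2)}$ and does not depend on $\varsigma_k$. The claimed right-hand side is at most $c_1\exp(-(d_{\mathrm{safe}}-m(\bar{s}_k))^2/(2\sigma^2)-d_{\mathrm{safe}}^2/(2\varsigma_k^2))$. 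It drops below the violation probability as the design parameter $\varsigma_k\to 0$, or as $d_{\mathrm{safe}}\to\infty$ for fixed $\varsigma_k$, because of its extra Gaussian decay in $d_{\mathrm{safe}}$.

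In Step~4, the derivative of $g(s)=(1+\gamma s)(c_r s+c_0)^2/(1+\kappa_\beta s)$ is not positive in general. A direct computation gives $(1+\kappa_\beta s)^2 g'(s)=(c_r s+c_0)\bigl[(\gamma-\kappa_\beta)(c_r s+c_0)+2c_r(1+\gamma s)(1+\kappa_\beta s)\bigr]$, whose sign at $s=0$ is that of $2c_r-(\kappa_\beta-\gamma)c_0$. Suppose $\kappa_\beta>\gamma+2c_r/c_0$ (e.g.\ large $\bar{v}$ and an aggressive temperature gain). Then the barrier exponent \emph{decreases} near $\bar{s}_k=0$ in the unclipped regime. $\Gamma$ then increases there whenever the $\varsigma_k$-term is flat, e.g.\ $\kappa_\varsigma=0$ or $\varsigma_k$ clipped. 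The paper's comparison of $O(\bar{s}_k^2)$ against $O(\bar{s}_k)$ also yields only eventual growth. Global monotonicity needs an explicit condition such as $2c_r\ge(\kappa_\beta-\gamma)c_0$; under it the bracket is a polynomial in $s$ with nonnegative coefficients. Only your claim $\Gamma\in(0,1]$ goes through as written.
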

\begin{proof}
\textit{Step~1.} Proposition~\ref{prop:terminal_safety_bound} gives
the baseline exponential.
\textit{Step~2.} By Lemma~\ref{lem:temperature_averaging}(ii), the
total unsafe weight is bounded by
$\exp(-\alpha_k\phi(m(\bar{s}_k))/\beta_k)$; although $\beta_k
\uparrow$, the $O(\bar{s}_k^2)$ numerator dominates the
$O(\bar{s}_k)$ denominator, so the exponent diverges to $-\infty$.
\textit{Step~3.} By Lemma~\ref{lem:sampling_concentration}, reaching
the unsafe region requires deviation $\geq d_{\mathrm{safe}}/L_f^N$,
with probability $\leq 2n_x\exp(-d_{\mathrm{safe}}^2 /
2L_f^{2N}\varsigma_k^2)$.
\textit{Step~4.} Combining Steps~2--3 via $\exp(-A)\exp(-B) =
\exp(-(A+B))$ gives $\Gamma$ (absorbing $2n_x$ into $c_1$);
multiplying with Step~1 gives~\eqref{eq:joint_safety_bound}.
Monotonicity: $\varsigma_k \downarrow$ drives the second exponent
more negative; $\alpha_k\phi(m(\bar{s}_k))/\beta_k \sim
O(\bar{s}_k^2)$ drives the first more negative. Hence $\Gamma \in
(0,1]$ nonincreasing.
\end{proof}

\begin{corollary}[RC-MPPI Dominates Vanilla MPPI Under Mismatch]
\label{cor:dominance}
Under Assumptions~\ref{ass:lipschitz}--\ref{ass:nominal_competence},
suppose $d_{\mathrm{safe}} \geq m(\bar{s}_k)$. Applying
Theorem~\ref{thm:horizon_bound} to bound the terminal prediction
error and invoking Proposition~\ref{prop:joint_safety},
\begin{equation}\label{eq:dominance_bound}
\begin{aligned}
&\mathbb{P}(h(\mathbf{x}_{k+N}) > 0 \mid \mathcal{F}_{k+1}) \\
&\quad\leq
c_1
\exp\!\left(
-\frac{c_2\bigl(d_{\mathrm{safe}}-m(\bar{s}_k)\bigr)^2}
{\sigma_x^2}
\right)
\Gamma(\bar{s}_k) \\
&\quad\leq \delta_0,
\end{aligned}
\end{equation}
so RC-MPPI achieves constraint satisfaction with probability at least
$1 - \delta_0$.
\end{corollary}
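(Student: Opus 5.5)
The first inequality in \eqref{eq:dominance_bound} is exactly Proposition~\ref{prop:joint_safety}. It applies verbatim under the hypothesis $d_{\mathrm{safe}} \geq m(\bar{s}_k)$, so no new argument is needed there. The work lies in the second inequality, which says the product bound is at most $\delta_0$. My plan is to derive it by comparing with the zero-residual configuration. At $\bar{s}_k = 0$, Corollary~\ref{cor:consistency} says RC-MPPI coincides with vanilla MPPI at temperature $\beta_0$. Assumption~\ref{ass:nominal_competence} then gives constraint satisfaction with probability at least $1-\delta_0$.

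\textbf{Step 1 (anchoring at zero residual).} I would evaluate the right-hand side $B(\bar{s}_k) := c_1\exp(-c_2(d_{\mathrm{safe}}-m(\bar{s}_k))^2/\sigma_x^2)\,\Gamma(\bar{s}_k)$ at $\bar{s}_k=0$, where $m(0)=L_hc_0$, $\alpha_k=\alpha_0$, $\beta_k=\beta_0$, $\varsigma_k=\varsigma_0$. I would then argue that $B(0)\le\delta_0$. This is the step I expect to be the main obstacle. Assumption~\ref{ass:nominal_competence} is behavioral and controls the \emph{true} violation probability, not the bound $B(0)$. The most defensible route is to read the assumption as a calibration of the constants: treat $\delta_0$ as the nominal certified level, so that the analytic bound under nominal conditions is itself at most $\delta_0$. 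Concretely, I would choose $\delta_0 := \max\{\delta_0^{\mathrm{beh}}, B(0)\}$, which remains in $(0,1)$ whenever $B(0)<1$. If that is not acceptable, I would add the hypothesis that the nominal margin $d_{\mathrm{safe}}$ is large enough that $B(0)\le\delta_0$. This is the usual way such dominance statements are closed.

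\textbf{Step 2 (monotonicity in $\bar{s}_k$).} I would show that $B$ does not exceed $B(0)$ as $\bar{s}_k$ grows. The factor $\Gamma$ is nonincreasing by Proposition~\ref{prop:joint_safety}. The first exponential, however, \emph{increases} with $\bar{s}_k$, because $d_{\mathrm{safe}}-m(\bar{s}_k)$ shrinks. So I cannot simply multiply two monotone factors. Instead I would bound the log of the product. I would write $\log B(\bar{s}_k)-\log B(0)$ as the sum of two pieces:
\begin{itemize}
\item an increase of at most $c_2\bigl[(d_{\mathrm{safe}}-L_hc_0)^2-(d_{\mathrm{safe}}-m(\bar{s}_k))^2\bigr]/\sigma_x^2 \le 2c_2 d_{\mathrm{safe}}L_hc_r\bar{s}_k/\sigma_x^2$;
\item a decrease of at least $\alpha_k\phi(m(\bar{s}_k))/\beta_k-\alpha_0\phi(L_hc_0)/\beta_0$ plus the change in the $\varsigma$-term.
\end{itemize}
Lemma~\ref{lem:temperature_averaging}(ii) shows that $\alpha_k\phi(m)/\beta_k$ grows like $\bar{s}_k^2$ with positive leading coefficient, and it also grows linearly near zero. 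So the decrease dominates the linear increase once the gains satisfy $\alpha_0\gamma L_h^2 c_0^2/\beta_0 + 2\alpha_0 L_h^2 c_0 c_r/\beta_0 \gtrsim 2c_2 d_{\mathrm{safe}}L_h c_r/\sigma_x^2$ (up to the clipping in $\beta_k$). Under that condition $B(\bar{s}_k)\le B(0)\le\delta_0$.

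\textbf{Step 3 (conclusion).} Chaining Proposition~\ref{prop:joint_safety}, Step~2, and Step~1 gives \eqref{eq:dominance_bound}. Taking complements gives satisfaction probability at least $1-\delta_0$. Strictness for $\bar{s}_k>0$ follows when the gain condition in Step~2 holds strictly. I expect the honest difficulty to lie in Steps~1--2. The corollary as stated implicitly needs either a calibration of $\delta_0$ or a gain condition, and I would state whichever is required explicitly.
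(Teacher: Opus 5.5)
Your skeleton is the paper's own. The first inequality is Proposition~\ref{prop:joint_safety}. The second comes from anchoring at $\bar{s}_k=0$ via Assumption~\ref{ass:nominal_competence} and then appealing to monotonicity in $\bar{s}_k$. Your diagnosis of why this does not close is correct, and the paper's proof is weaker than you anticipate:
\begin{itemize}
\item It asserts $\Gamma(0)=1$. This is false, because $m(0)=L_hc_0>0$ and $d_{\mathrm{safe}}>0$ give $\Gamma(0)=\exp\bigl(-\alpha_0\phi(L_hc_0)/\beta_0-d_{\mathrm{safe}}^2/(2L_f^{2N}\varsigma_0^2)\bigr)<1$.
\item It states that the bound ``equals $\delta_0$ by Assumption~\ref{ass:nominal_competence}'', which is the non sequitur you flag.
\item It infers improvement from $\Gamma(\bar{s}_k)<1$ while ignoring that the leading exponential grows with $\bar{s}_k$.
\end{itemize}
So an explicit extra hypothesis is unavoidable. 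Of your two Step~1 options, keep the margin hypothesis $B(0)\le\delta_0$. Redefining $\delta_0:=\max\{\delta_0,B(0)\}$ makes $\delta_0$ depend on the current, trajectory-dependent $d_{\mathrm{safe}}$, so it is no longer the fixed vanilla-MPPI benchmark of Assumption~\ref{ass:nominal_competence}. Also, $B(0)<1$ is not automatic, since $c_1=a_1$ may exceed $1$.

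The genuine gap is in your Step~2 repair. Write $\Gamma=e^{-E}$ with $E(\bar{s}):=\alpha_k\phi(m(\bar{s}))/\beta_k+d_{\mathrm{safe}}^2/(2L_f^{2N}\varsigma_k^2)$. Your slope for the barrier part treats $\beta_k$ as frozen at $\beta_0$. Without clipping, the right-derivative of that term at $0$ is
\[
\frac{\alpha_0L_h^2c_0}{\beta_0}\bigl(\gamma c_0+2c_r-\kappa_\beta c_0\bigr).
\]
So the rising temperature enters with a minus sign, and this is exactly the effect the corollary claims is harmless. The term \emph{decreases} near $0$ whenever $\kappa_\beta>\gamma+2c_r/c_0$. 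The paper's claim that this term drives $\Gamma$ down rests on the same purely asymptotic reasoning. Conversely, you discarded the favorable sampling term, whose slope $\kappa_\varsigma d_{\mathrm{safe}}^2/(L_f^{2N}\varsigma_0^2)$ grows quadratically in $d_{\mathrm{safe}}$, while your adverse slope grows only linearly.

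Moreover, matching slopes at $0$ does not give $B(\bar{s})\le B(0)$ on the whole admissible range. First, $E$ need not be convex: its $\varsigma$-part flattens once $\varsigma_k$ reaches $\varsigma_{\min}$. Second, the $O(\bar{s}_k^2)$-versus-$O(\bar{s}_k)$ asymptotics of Lemma~\ref{lem:temperature_averaging}(ii) carry no weight here. The hypothesis $d_{\mathrm{safe}}\ge m(\bar{s}_k)$ confines $\bar{s}_k$ to the bounded interval $[0,\bar{s}^\star]$ with $\bar{s}^\star:=(d_{\mathrm{safe}}-L_hc_0)/(L_hc_r)$.

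The condition to state is the global one:
\[
E(\bar{s})-E(0)\ge c_2\bigl[(d_{\mathrm{safe}}-L_hc_0)^2-(d_{\mathrm{safe}}-m(\bar{s}))^2\bigr]/\sigma_x^2 \quad \text{for all } \bar{s}\in[0,\bar{s}^\star].
\]
Your linear bound $2c_2d_{\mathrm{safe}}L_hc_r\bar{s}/\sigma_x^2$ is a valid sufficient substitute for the right-hand side. This condition is exactly equivalent to $B(\bar{s})\le B(0)$. Together with $B(0)\le\delta_0$, your chain then goes through. Note, however, that Assumption~\ref{ass:nominal_competence} then does no logical work beyond naming $\delta_0$.
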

\begin{proof}
The first inequality is Proposition~\ref{prop:joint_safety}. For the
second, note that $\Gamma(\bar{s}_k) \in (0,1]$ is nonincreasing in
$\bar{s}_k$ with $\Gamma(0) = 1$. At $\bar{s}_k = 0$,
Proposition~\ref{prop:joint_safety} reduces to
Proposition~\ref{prop:terminal_safety_bound} with $m(0) = L_h c_0$,
and the bound equals $\delta_0$ by
Assumption~\ref{ass:nominal_competence}. For $\bar{s}_k > 0$,
$\Gamma(\bar{s}_k) < 1$ gives strict improvement over the vanilla
MPPI baseline.
\end{proof}
\begin{remark}[Synergistic improvement]
The leading exponential captures geometric tightening alone; $\Gamma
\leq 1$ provides additional multiplicative reduction. Critically,
rising $\beta_k$ does \emph{not} degrade safety: quadratic barrier
growth dominates linear temperature growth. As $\bar{s}_k \to 0$,
$\Gamma \to 1$ recovering Corollary~\ref{cor:consistency}.
\end{remark}

\subsection{Rollout-Cost Sensitivity Analysis}

We now prove that the $\beta \uparrow$ strategy is not merely safe
but \emph{preferable} above an explicit mismatch threshold. The key
additional ingredient is Lemma~\ref{lem:cost_perturbation}, which
bounds how model-plant mismatch perturbs the rollout cost landscape.

\begin{lemma}[Bounded Cost Perturbation]
\label{lem:cost_perturbation}
Under Assumptions~\ref{ass:lipschitz}--\ref{ass:subgaussian} and
Theorem~\ref{thm:horizon_bound}, conditioned on
$\mathcal{F}_{k+1}$, let $\mathbf{U} := \{\mathbf{u}_k, \ldots,
\mathbf{u}_{k+N-1}\} \in \mathcal{U}^N$ denote the control trajectory
over the planning horizon. The trajectory cost perturbation satisfies
\begin{equation}\label{eq:cost_perturbation}
|Z^{\mathrm{true}}(\mathbf{U})-Z^{\mathrm{nom}}(\mathbf{U})|
\leq C_\Delta\,\bar{s}_k
\end{equation}
uniformly over $\mathbf{U} \in \mathcal{U}^N$, where $C_\Delta :=
N(\alpha_k L_h L_f^N + W_{\max})(\tilde{c}_r + \tilde{c}_0)$ and
$W_{\max}$ is the maximum cost weight. Here $C_\Delta$ and $\bar{s}_k$
are both $\mathcal{F}_{k+1}$-measurable, so the bound is a
deterministic statement given the observed history.
\end{lemma}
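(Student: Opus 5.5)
The plan is to compare the two trajectory costs term by term, using the one-step deviation bound~\eqref{eq:one_step_bound} from Step~1 of Theorem~\ref{thm:horizon_bound} as the basic per-step state deviation. Fix $\mathbf{U}\in\mathcal{U}^N$. Let $\mathbf{x}^{\mathrm{true}}_{k+t}$ and $\mathbf{x}^{\mathrm{nom}}_{k+t}$ denote the states generated under $\mathbf{U}$ by the true plant and by $f_\theta$, respectively. Write
\[
Z^{\mathrm{true}}-Z^{\mathrm{nom}}=\sum_{t=0}^{N-1}\bigl(\Delta\ell_{\mathrm{trk},t}+\Delta\ell_{\mathrm{safe},t}\bigr)+\Delta\ell_{f}.
\]
The control penalty $\ell_u$ cancels, since it depends only on $\mathbf{U}$. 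Apply the triangle inequality and bound each summand separately.

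For the safety term, $\phi$ is $1$-Lipschitz on bounded arguments after a local bound on $\max(0,z)$, and $h$ is $L_h$-Lipschitz. So
\[
|\Delta\ell_{\mathrm{safe},t}|\le \alpha_k L_h\,\|\mathbf{x}^{\mathrm{true}}_{k+t}-\mathbf{x}^{\mathrm{nom}}_{k+t}\|,
\]
with the constant from $\phi'$ on the compact set $\mathcal{X}$ absorbed into the normalization. For the quadratic tracking and terminal terms, compactness of $\mathcal{X}$ makes $\|\mathbf{x}-\mathbf{x}^{\mathrm{ref}}\|_{\mathbf{Q}}^2$ Lipschitz. I will fold its Lipschitz constant (a multiple of $\|\mathbf{Q}\|$ and $\|\mathbf{Q}_f\|$ times $\mathrm{diam}\,\mathcal{X}$) into $W_{\max}$, interpreted as the maximum effective cost weight.

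Next I will bound the state deviation at each step. Propagating under Assumption~\ref{ass:lipschitz} gives amplification by at most $L_f^{t}\le L_f^N$, taking $L_f\ge1$ without loss. Each step's injected mismatch is bounded via~\eqref{eq:one_step_bound} by $\tilde c_r\bar s_k+\tilde c_0$, using the planning-window stationarity invoked in Step~3 of Theorem~\ref{thm:horizon_bound}. To get the form $(\tilde c_r+\tilde c_0)\bar s_k$, which is linear in $\bar s_k$ with no offset, I would restrict to the regime $\bar s_k\ge 1$, where $\tilde c_0\le\tilde c_0\bar s_k$. Alternatively, I would measure the mismatch in units where the noise contribution is dominated by the residual. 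This is the step I expect to be the main obstacle, since the noise offset $\bar v$ does not naturally vanish as $\bar s_k\to0$. I would state the normalization explicitly rather than hide it.

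Finally, summing the $N$ per-step bounds gives
\[
|Z^{\mathrm{true}}-Z^{\mathrm{nom}}|\le N(\alpha_kL_hL_f^N+W_{\max})(\tilde c_r+\tilde c_0)\bar s_k .
\]
The terminal term is absorbed by the same per-step bound. The right-hand side does not depend on $\mathbf{U}$, so the bound is uniform. Measurability holds because $\alpha_k$ and $\bar s_k$ are functions of $\mathcal{F}_{k+1}$-measurable quantities.

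A second delicacy concerns the disturbance $\mathbf{w}$. It enters the true rollout, so the bound is deterministic only if $Z^{\mathrm{true}}$ is read as the disturbance-free true model $\tilde f$, with the stochastic part handled separately through $\boldsymbol{\xi}_{k+N}$. I would adopt that reading.
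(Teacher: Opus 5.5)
Your argument follows the paper's proof. That proof asserts a per-step deviation of $L_f^t(\tilde c_r\bar s_k+\tilde c_0)$ between the true and nominal trajectories, gives a Lipschitz constant for each cost term, and sums over $N$ steps. The two obstacles you flag are genuine, and the paper passes over both silently. Replacing $\tilde c_r\bar s_k+\tilde c_0$ by $(\tilde c_r+\tilde c_0)\bar s_k$ needs $\bar s_k\ge 1$: the right-hand side of \eqref{eq:cost_perturbation} vanishes as $\bar s_k\to 0$, while an $O(\bar v)$ discrepancy can remain. A deterministic bound also forces your disturbance-free reading of $Z^{\mathrm{true}}$, since $\mathbf{w}_k$ is only sub-Gaussian. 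Making both explicit is right. Even so, two of your steps fail as written (the paper's sketch has the same two problems), so you do not arrive at the stated $C_\Delta$.

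First, $\phi(z)=\max(0,z)^2$ is not $1$-Lipschitz. Since $\phi'(z)=2\max(0,z)$, on the arguments $h(\mathbf{x})+m(\bar s_k)$ with $\mathbf{x}\in\mathcal{X}$ the barrier term has Lipschitz constant $2\alpha_k L_h\max\bigl(0,\max_{\mathcal{X}}h+m(\bar s_k)\bigr)$. This factor grows with $\bar s_k$ through $m$, so it cannot be absorbed into a fixed normalization and must appear in $C_\Delta$.

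Second, the per-step deviation is not $L_f^t\varepsilon$. A mismatch of size $\varepsilon:=\tilde c_r\bar s_k+\tilde c_0$ injected at every step and amplified by $L_f$ gives, exactly as in Step~2 of Theorem~\ref{thm:horizon_bound}, $\|\mathbf{x}^{\mathrm{true}}_{k+t}-\mathbf{x}^{\mathrm{nom}}_{k+t}\|\le L_f^t\bar v+S_t\varepsilon$ with $S_t=\sum_{j<t}L_f^j$. The bound $L_f^t\varepsilon$ would hold only if the mismatch entered once, at the initial state. Summing over $t=0,\dots,N$ gives the prefactor $\sum_{t=0}^{N}(L_f^t+S_t)$; the terminal cost is an extra summand at the largest deviation, not something absorbed. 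When $L_f=1$ this prefactor equals $(N+1)(N+2)/2$. A constant model bias with $L_f=1$ and a tracking cost whose gradient is aligned with the bias already produces cost differences of order $N^2$, so no prefactor linear in $N$ can be recovered.

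Third, \eqref{eq:one_step_bound} and the stationarity condition of Step~3 concern realized residuals along the executed trajectory. They say nothing about $\tilde f-f_\theta$ along an arbitrary $\mathbf{U}\in\mathcal{U}^N$, so the claimed uniformity needs an explicit hypothesis such as $\sup_{\mathcal{X}\times\mathcal{U}}\|\tilde f-f_\theta\|\le\varepsilon$.

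With that hypothesis, $L_f\ge 1$, $\bar s_k\ge 1$ and your disturbance-free reading, the argument closes. The constant $C_\Delta$ must then be replaced by $(N+1)^2L_f^N\bigl(W_{\max}+2\alpha_kL_h\max(0,\max_{\mathcal{X}}h+m(\bar s_k))\bigr)(\tilde c_r+\tilde c_0)$.
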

\begin{proof}
Under Assumption~\ref{ass:lipschitz} and Theorem~\ref{thm:horizon_bound},
the true and nominal state trajectories deviate by at most
$L_f^t(\tilde{c}_r\bar{s}_k + \tilde{c}_0)$ at step $t$. Each cost
term is Lipschitz in $\mathbf{x}$ with constant bounded by
$W_{\max}(1 + \alpha_k L_h L_f^N)$. Summing over $N$ steps and
bounding $\sum_{t=0}^{N-1} L_f^t \leq NL_f^N$
gives~\eqref{eq:cost_perturbation}.
\end{proof}

\begin{proposition}[Weight Sensitivity Under Rollout-Cost Uncertainty]
\label{prop:temp_sensitivity}
Suppose the nominal rollout costs used by MPPI are perturbed by
model uncertainty as
\[
Z_i^{\mathrm{true}} = Z_i^{\mathrm{nom}} + \Delta_i,
\qquad
|\Delta_i| \leq C_\Delta \bar{s}_k,
\]
for each sampled rollout $i = 1, \ldots, K$. Let
\[
w_i(Z, \beta)
=
\frac{\exp(-Z_i/\beta)}
{\sum_{j=1}^K \exp(-Z_j/\beta)}
\]
denote the MPPI importance weight at temperature $\beta$. Then,
applying Lemma~\ref{lem:cost_perturbation}, the sensitivity of the
importance weights to rollout-cost uncertainty is bounded as
\begin{equation}
\|w(Z^{\mathrm{true}}, \beta) - w(Z^{\mathrm{nom}}, \beta)\|_1
\leq
\frac{2C_\Delta \bar{s}_k}{\beta}.
\label{eq:weight_uncertainty_bound}
\end{equation}
Consequently, increasing $\beta$ reduces the effect of model-induced
rollout-cost uncertainty on the MPPI update. Since
Lemma~\ref{lem:cost_perturbation} shows that this uncertainty grows
with the residual magnitude $\bar{s}_k$, the residual-adaptive rule
$\beta_k = \beta_0(1 + \kappa_\beta\bar{s}_k)$ reduces
overcommitment to cost rankings that become unreliable under large
model-plant mismatch.
\end{proposition}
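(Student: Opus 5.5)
We bound the $\ell_1$ distance between the two softmax vectors with the mean value theorem along a straight-line path in cost space. Fix $\beta>0$ and set
\[
Z(\tau) := Z^{\mathrm{nom}} + \tau\Delta, \qquad \tau\in[0,1],
\]
so that $Z(0)=Z^{\mathrm{nom}}$ and $Z(1)=Z^{\mathrm{true}}$. Because $w(\cdot,\beta)$ is smooth, we can write
\[
w(Z^{\mathrm{true}},\beta)-w(Z^{\mathrm{nom}},\beta)
=\int_0^1 J(Z(\tau))\,\Delta\,d\tau,
\]
where $J$ is the Jacobian of the softmax with respect to the costs. A direct computation gives
\[
\partial w_i/\partial Z_j = -\tfrac{1}{\beta}\,w_i(\mathbb{1}_{i=j}-w_j),
\]
so $J = -\tfrac{1}{\beta}\bigl(\mathrm{diag}(w)-ww^\top\bigr)$.

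The key step is to bound $\|J\Delta\|_1$ for an arbitrary weight vector $w$ in the simplex. Writing $\bar\Delta := \sum_j w_j\Delta_j$, we have
\[
(J\Delta)_i = -\tfrac{1}{\beta}\,w_i(\Delta_i-\bar\Delta).
\]
Hence
\[
\|J\Delta\|_1 \le \tfrac{1}{\beta}\sum_i w_i|\Delta_i-\bar\Delta| \le \tfrac{1}{\beta}\max_{i,j}|\Delta_i-\Delta_j|.
\]
Since $|\Delta_i|\le C_\Delta\bar s_k$ for every $i$, the spread satisfies $\max_{i,j}|\Delta_i-\Delta_j|\le 2C_\Delta\bar s_k$. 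This bound holds uniformly in $\tau$, because it uses only that $w(Z(\tau))$ lies in the simplex. Integrating over $\tau\in[0,1]$ then yields~\eqref{eq:weight_uncertainty_bound}.

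It remains to check that the hypothesis $|\Delta_i|\le C_\Delta\bar s_k$ is legitimate. This follows from Lemma~\ref{lem:cost_perturbation}, applied to each sampled sequence $\mathbf{U}^{(i)}$; the lemma's uniformity over $\mathcal{U}^N$ is what covers all $K$ rollouts at once. Conditioned on $\mathcal{F}_{k+1}$, both $C_\Delta$ and $\bar s_k$ are fixed, so the bound is deterministic given the history. The consequences then follow by inspection. The right-hand side $2C_\Delta\bar s_k/\beta$ is strictly decreasing in $\beta$. Substituting $\beta_k=\beta_0(1+\kappa_\beta\bar s_k)$ gives
\[
\frac{2C_\Delta\bar s_k}{\beta_0(1+\kappa_\beta\bar s_k)},
\]
which saturates at $2C_\Delta/(\beta_0\kappa_\beta)$ instead of growing linearly in $\bar s_k$. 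When the clip to $[\beta_{\min},\beta_{\max}]$ is active, the same statement holds with $\beta_k$ replaced by its clipped value.

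The main obstacle is the Jacobian-norm step. A naive operator-norm bound on $\mathrm{diag}(w)-ww^\top$ can introduce a factor depending on $K$, so we avoid it and use the centered form $w_i(\Delta_i-\bar\Delta)$ instead. As a fallback, one can use the elementary ratio bound $e^{-2C_\Delta\bar s_k/\beta}\le w_i^{\mathrm{true}}/w_i^{\mathrm{nom}}\le e^{2C_\Delta\bar s_k/\beta}$. This gives
\[
\|w^{\mathrm{true}}-w^{\mathrm{nom}}\|_1\le e^{2C_\Delta\bar s_k/\beta}-1,
\]
which matches the stated bound to first order but is weaker when $C_\Delta\bar s_k/\beta$ is large. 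For that reason we prefer the integral argument, which gives the clean constant $2$ directly.
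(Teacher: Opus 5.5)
Your proof is correct and follows the paper's route: the same straight-line path $Z(\tau)=Z^{\mathrm{nom}}+\tau\Delta$, the same softmax Jacobian $-\frac{1}{\beta}w_i(\delta_{ij}-w_j)$, and a bound on the Jacobian--vector product that holds uniformly over the simplex. The only difference is cosmetic: you bound the centered form $w_i(\Delta_i-\bar\Delta)$ by the spread of $\Delta$, whereas the paper uses the triangle inequality with $\sum_j|\delta_{ij}-w_j|=2(1-w_i)\le 2$; both give the constant $2$.

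One side remark, outside the proposition itself: your claim that the substituted bound saturates at $2C_\Delta/(\beta_0\kappa_\beta)$ treats $C_\Delta$ as a constant. In fact $C_\Delta$ contains $\alpha_k=\alpha_0(1+\gamma\bar{s}_k)$, so for $\gamma>0$ the bound $2C_\Delta\bar{s}_k/\beta_k$ does not saturate and keeps growing with $\bar{s}_k$.
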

\begin{proof}
Let $Z(\tau) = Z^{\mathrm{nom}} + \tau\Delta$ for $\tau \in [0,1]$.
By the mean-value theorem,
\[
w_i(Z^{\mathrm{true}}, \beta) - w_i(Z^{\mathrm{nom}}, \beta)
=
\int_0^1
\frac{d}{d\tau}w_i(Z(\tau), \beta)\,d\tau.
\]
For the softmax weights with negative costs,
\[
\frac{\partial w_i}{\partial Z_j}
=
-\frac{1}{\beta}w_i(\delta_{ij} - w_j).
\]
Hence
\[
\left|\frac{d}{d\tau}w_i(Z(\tau), \beta)\right|
=
\left|
\sum_{j=1}^K
\frac{\partial w_i}{\partial Z_j}\Delta_j
\right|
\leq
\frac{1}{\beta}
w_i
\sum_{j=1}^K
|\delta_{ij} - w_j|\,|\Delta_j|.
\]
Since $|\Delta_j| \leq C_\Delta\bar{s}_k$ and
\[
\sum_{j=1}^K |\delta_{ij} - w_j|
=
|1 - w_i| + \sum_{j \neq i} w_j
=
2(1 - w_i)
\leq 2,
\]
we obtain
\[
\left|\frac{d}{d\tau}w_i(Z(\tau), \beta)\right|
\leq
\frac{2C_\Delta\bar{s}_k}{\beta}w_i.
\]
Summing over $i$ and using $\sum_i w_i = 1$ gives
\begin{equation*}
\begin{aligned}
\|w(Z^{\mathrm{true}}, \beta) - w(Z^{\mathrm{nom}}, \beta)\|_1
&\leq
\int_0^1
\sum_i
\left|
\frac{d}{d\tau}w_i(Z(\tau), \beta)
\right|
d\tau \\
&\leq
\frac{2C_\Delta\bar{s}_k}{\beta}.
\end{aligned}
\end{equation*}
Thus the influence of model-induced rollout-cost errors on the MPPI
weights scales as $C_\Delta\bar{s}_k/\beta$. Increasing $\beta$
therefore reduces sensitivity to uncertain rollout rankings. Since
Lemma~\ref{lem:cost_perturbation} bounds the rollout-cost uncertainty
by a residual-dependent term, the adaptive choice $\beta_k \uparrow$
under large $\bar{s}_k$ directly implements residual-adaptive
temperature relaxation.
\end{proof}

\begin{remark}[Scope and connection to model uncertainty]
Lemma~\ref{lem:cost_perturbation} shows that the residual $\bar{s}_k$
upper-bounds the mismatch-induced rollout-cost uncertainty, and
Proposition~\ref{prop:temp_sensitivity} shows that the resulting MPPI
weight perturbation is bounded by $2C_\Delta\bar{s}_k/\beta_k$.
Thus, raising $\beta_k$ under growing mismatch reduces sensitivity to
unreliable cost rankings, while simultaneous constraint tightening and
penalty amplification maintain safety. However,
Proposition~\ref{prop:temp_sensitivity} does not advocate unbounded
temperature increase: in the limit $\beta_k \to \infty$, importance
weights become uniform and the MPPI update degenerates to unguided
random averaging. In RC-MPPI this is prevented by the clipping
in~\eqref{eq:sampling_modulation}, which keeps $\beta_k \leq
\beta_{\max}$, so temperature relaxation remains moderate and the
controller retains directional guidance from the cost landscape.
\end{remark}

\begin{remark}[Residual vs.\ stochastic disturbance]
$\bar{s}_k$ captures \emph{systematic} mismatch (actuator lag,
saturation, model error) while $\boldsymbol{\xi}_{k+N}$ models
\emph{stochastic} disturbances. RC-MPPI separates these:
$m(\bar{s}_k)$ compensates for structured bias by shrinking the
effective safe set, while the exponential
bound~\eqref{eq:terminal_prob_bound} quantifies remaining stochastic
violation probability due to $\boldsymbol{\xi}_{k+N}$.
\end{remark}
\section{Episodic Model Adaptation}

On the fast time scale, $\bar{s}_k$ drives online adaptation without
modifying $\theta$. On the slow scale, model parameters are updated
episodically, reducing prediction error and relaxing safety margins.

\begin{lemma}[Prediction Loss Reduction]
\label{lem:residual_reduction}
Let $\mathcal{L}(\theta) := \frac{1}{M}\sum_{i=1}^M
\|\mathbf{y}_i - f_\theta(\mathbf{x}_i, \mathbf{u}_i)\|^2$
denote the empirical prediction loss. If $\theta^+$ is obtained
by a descent step from $\theta$ on $\mathcal{L}$, then
$\mathcal{L}(\theta^+) \leq \mathcal{L}(\theta)$.
\end{lemma}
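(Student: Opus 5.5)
The plan is to make precise what ``descent step'' means and then apply the standard descent lemma of smooth optimization. I will take the episodic update to be a gradient step $\theta^+ = \theta - \eta\nabla\mathcal{L}(\theta)$ with step size $\eta>0$. I will also assume the parametric predictor $f_\theta(\mathbf{x},\mathbf{u})$ is twice continuously differentiable in $\theta$. Since the data $(\mathbf{x}_i,\mathbf{u}_i)$ lie in the compact set $\mathcal{X}\times\mathcal{U}$ and there are finitely many samples $M$, $\mathcal{L}$ is then a finite sum of squared $C^2$ residuals. Its gradient is therefore Lipschitz with some constant $L_\nabla$ on any compact neighborhood of the current parameter, for instance a ball containing the segment $[\theta,\theta^+]$.

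\textbf{Step~1.} With $L_\nabla$-smoothness along the segment, apply the quadratic upper bound
\[
\mathcal{L}(\theta') \le \mathcal{L}(\theta) + \nabla\mathcal{L}(\theta)^\top(\theta'-\theta) + \tfrac{L_\nabla}{2}\|\theta'-\theta\|^2 .
\]
\textbf{Step~2.} Substitute $\theta'=\theta^+$. This gives
\[
\mathcal{L}(\theta^+) \le \mathcal{L}(\theta) - \eta\bigl(1-\tfrac{\eta L_\nabla}{2}\bigr)\|\nabla\mathcal{L}(\theta)\|^2 .
\]
\textbf{Step~3.} For any $\eta\in(0,2/L_\nabla]$ the subtracted term is nonnegative, so $\mathcal{L}(\theta^+)\le\mathcal{L}(\theta)$. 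The inequality is strict whenever $\nabla\mathcal{L}(\theta)\neq 0$ and $\eta<2/L_\nabla$.

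The main obstacle is that the lemma states no smoothness assumption or step-size restriction. Without them, a fixed-step gradient update can overshoot and increase $\mathcal{L}$. My first remedy is a backtracking (Armijo) line search, which removes the need to know $L_\nabla$. Because $\mathcal{L}$ is differentiable, the directional derivative $-\|\nabla\mathcal{L}(\theta)\|^2$ is negative whenever the gradient is nonzero. Hence there is $\bar\eta>0$ such that $\mathcal{L}(\theta-\eta\nabla\mathcal{L})<\mathcal{L}(\theta)$ for all $\eta\in(0,\bar\eta]$, and backtracking terminates at such an $\eta$. If instead $\nabla\mathcal{L}(\theta)=0$, then $\theta^+=\theta$ and equality holds trivially.

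As a fallback covering any update rule, including stochastic or Adam-type steps, the implementation can use an accept/reject safeguard: set $\theta^+$ to the candidate only if the candidate does not increase $\mathcal{L}$, and otherwise keep $\theta$. The conclusion then holds by construction. I will state explicitly which of these interpretations is adopted, since the lemma is only as strong as that definition.
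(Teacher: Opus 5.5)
Your proof is correct, but it does much more work than the paper's. The paper proves the lemma in one line: it takes ``descent step'' to mean, by definition, an update that does not increase $\mathcal{L}$, and notes that $\theta$ itself is always an admissible outcome. That is exactly your accept/reject fallback, so your last paragraph already reproduces the paper's argument.

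Your main route gives the lemma real content. It adds the hypothesis that $f_\theta$ is $C^2$ in $\theta$ and then uses the descent lemma. This shows that a concrete gradient step is non-increasing if $\eta\in(0,2/L_\nabla]$, or if $\eta$ is chosen by Armijo backtracking. The decrease is strict away from stationary points. What this buys is a checkable condition on the update rule rather than a tautology. It also makes explicit, as you correctly note, that a fixed-step or Adam-type update without such safeguards need not satisfy the conclusion. The price is smoothness and step-size assumptions that the paper never states.

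One small point needs tightening. You take $L_\nabla$ on a ball containing $[\theta,\theta^+]$, but $\theta^+$ depends on $\eta$, which is itself restricted by $L_\nabla$, so the argument is circular as written. Fix the ball $B(\theta,R)$ first and let $L_\nabla$ be the gradient's Lipschitz constant on it. Then, when $\nabla\mathcal{L}(\theta)\neq 0$, require $\eta\le\min\{2/L_\nabla,\,R/\|\nabla\mathcal{L}(\theta)\|\}$, so the segment stays inside the ball.
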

\begin{proof}
A descent step ensures $\mathcal{L}(\theta^+) \leq \mathcal{L}(\theta)$
by definition; $\theta$ is always a feasible starting point.
\end{proof}

\begin{theorem}[Decreasing Conservatism Under Model Improvement]
\label{thm:decreasing_conservatism}
If $\mathbb{E}\|\mathbf{r}(\theta^+)\| \leq
\mathbb{E}\|\mathbf{r}(\theta)\|$, then
$\mathbb{E}[m(\bar{s}(\theta^+))] \leq
\mathbb{E}[m(\bar{s}(\theta))]$.
\end{theorem}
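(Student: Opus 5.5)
The plan is to chain three facts: $m$ is affine with nonnegative slope, the filter \eqref{eqn:fil_statistics} is linear with nonnegative coefficients, and expectation is linear and monotone.

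\textbf{Reduce to the filtered statistic.} By \eqref{eq:tightening_margin}, $m(\bar{s}) = L_h(c_r\bar{s} + c_0)$ with $L_h, c_r > 0$. Here $c_r = S_N\|\mathbf{W}_r^{-1}\|/\rho$ and $c_0 = (L_f^N + S_N)\bar{v}$, which do not depend on $\theta$. This holds as long as $L_f$ is treated as a fixed Lipschitz bound valid for both $f_\theta$ and $f_{\theta^+}$, and I will state that assumption explicitly. Then
\[
\mathbb{E}[m(\bar{s})] = L_h c_r\,\mathbb{E}[\bar{s}] + L_h c_0,
\]
so it suffices to show $\mathbb{E}[\bar{s}(\theta^+)] \leq \mathbb{E}[\bar{s}(\theta)]$. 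If the clipped implementation of the Remark is used instead, $\mathrm{clip}(\kappa_r\bar{s},0,\Delta r_{\max})$ is nondecreasing and concave on $\bar{s}\geq 0$. Monotonicity in expectation is then not automatic, so I would restrict the theorem to the affine margin \eqref{eq:tightening_margin} as stated.

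\textbf{Unroll the filter.} From \eqref{eqn:fil_statistics},
\[
\bar{s}_k = (1-\rho)^k\bar{s}_0 + \rho\sum_{j=1}^{k}(1-\rho)^{k-j}s_j,
\]
where all coefficients are nonnegative since $\rho\in(0,1]$. Take expectations with the same initialization $\bar{s}_0$ for both models. This gives $\mathbb{E}[\bar{s}_k]$ as a nonnegative combination of $\mathbb{E}[s_j]$, where $s_j = \|\mathbf{W}_r\mathbf{r}_j\|$.

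\textbf{Compare $\mathbb{E}[s_j]$ across models.} This is the main obstacle. The hypothesis concerns $\mathbb{E}\|\mathbf{r}\|$, not $\mathbb{E}\|\mathbf{W}_r\mathbf{r}\|$. Also, the residual distribution at each time depends on the closed-loop trajectory, which itself changes with $\theta$. I would handle both issues by interpretation:
\begin{itemize}
\item Read the hypothesis as a statement about the stationary or per-step residual distribution, so that $\mathbb{E}\|\mathbf{r}_j(\theta^+)\| \leq \mathbb{E}\|\mathbf{r}_j(\theta)\|$ holds for every $j$.
\item Take $\mathbf{W}_r$ to be a scalar multiple of the identity, or equivalently define the hypothesis with the weighted norm, so that $\mathbb{E}[s_j]$ inherits the inequality.
\end{itemize}
Failing the scalar case, one only gets $\mathbb{E}[s_j(\theta^+)] \leq \|\mathbf{W}_r\|\|\mathbf{W}_r^{-1}\|\,\mathbb{E}[s_j(\theta)]$, which loses the conclusion. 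So I would state the weighted-norm reading explicitly.

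\textbf{Conclude.} Given the per-step inequality, substitute into the unrolled sum to get $\mathbb{E}[\bar{s}_k(\theta^+)] \leq \mathbb{E}[\bar{s}_k(\theta)]$ for all $k$. Multiplying by $L_h c_r \geq 0$ and adding $L_h c_0$ then yields the claim. In the stationary reading, the last two steps collapse: $\mathbb{E}[\bar{s}] = \mathbb{E}[s]$ in steady state, because the filter has unit DC gain.
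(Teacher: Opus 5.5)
Your argument follows the paper's proof: it uses the same three ingredients in reverse order. These are the transfer from $\mathbb{E}\|\mathbf{r}\|$ to $\mathbb{E}[s_k]$, linearity of the filter recursion in expectation, and monotonicity of the affine margin $m$; you additionally make explicit the common initialization $\bar{s}_0$, the per-step reading of the hypothesis, and the $\theta$-independence of $c_r, c_0$, none of which the paper addresses. Your caution on the $\mathbf{W}_r$ step is justified: the paper obtains it only from the one-sided bound $s_k \le \|\mathbf{W}_r\|\|\mathbf{r}_k\|$, which yields at best $\mathbb{E}[s_k(\theta^+)] \le \|\mathbf{W}_r\|\|\mathbf{W}_r^{-1}\|\,\mathbb{E}[s_k(\theta)]$, so your weighted-norm (or scalar-$\mathbf{W}_r$) reading of the hypothesis is what actually closes the argument.
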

\begin{proof}
Since $s_k = \|\mathbf{W}_r \mathbf{r}_k\| \leq \|\mathbf{W}_r\|
\|\mathbf{r}_k\|$, reduced $\mathbb{E}\|\mathbf{r}\|$ implies
reduced $\mathbb{E}[s_k]$. Linearity
of~\eqref{eqn:fil_statistics} then gives $\mathbb{E}[\bar{s}_t] =
(1-\rho)\mathbb{E}[\bar{s}_{t-1}] + \rho\mathbb{E}[s_t]$, so
$\mathbb{E}[\bar{s}_t]$ decreases. Since $m(\bar{s}) =
L_h(c_r\bar{s} + c_0)$ is nondecreasing in $\bar{s}$, it follows
that $\mathbb{E}[m(\bar{s}(\theta^+))] \leq
\mathbb{E}[m(\bar{s}(\theta))]$.
\end{proof}

As accuracy improves, residuals decrease, margins relax, and $\beta_k$
returns toward $\beta_0$, recovering nominal MPPI behavior.

\section{Simulation Study}\label{sec:simulation}

We evaluate RC-MPPI on two systems of increasing complexity. The
implementation is available at~\cite{yoon2025rcmppi_code}.

\subsection{LTI Point-Mass System}

\subsubsection{Setup}

The state $\mathbf{x} = [p_x, p_y, v_x, v_y]^\top$ evolves
according to the nominal discrete-time LTI model
$\mathbf{x}_{k+1} = \mathbf{A}\mathbf{x}_k + \mathbf{B}\mathbf{u}_k$
with
\begin{equation}
\mathbf{A}=
\begin{bmatrix}
1&0&\Delta t&0\\
0&1&0&\Delta t\\
0&0&1&0\\
0&0&0&1
\end{bmatrix},
\qquad
\mathbf{B}=
\begin{bmatrix}
0&0\\
0&0\\
\Delta t&0\\
0&\Delta t
\end{bmatrix},
\label{eq:lti}
\end{equation}
with $\Delta t = 0.1$\,s. The MPPI horizon is $T = 40$ with $K =
2048$ rollouts and input bound $u_{\max} = 4$. The obstacle is
centered at $\mathbf{c}^\star = [2.5, 0]^\top$ with radius $r =
1.5$\,m. The goal is $\mathbf{p}_g = [5, 0]^\top$. A trial is
considered successful if the goal is reached within $0.25$\,m and no
obstacle violation occurs.

\subsubsection{Model-Plant Mismatch}
The true plant includes a severe first-order actuator lag
\begin{equation}
\mathbf{v}^{\mathrm{act}}_{k+1}
=
(1-\alpha)\mathbf{v}^{\mathrm{act}}_k
+
\alpha\bigl(\mathbf{v}_k + \Delta t\,\mathbf{u}_k\bigr),
\end{equation}
with $\tau = 0.9$\,s and $\alpha = 1 - \exp(-\Delta t/\tau) \approx
0.105$. The planner assumes the nominal LTI model and therefore
systematically overestimates achievable velocity changes near the
obstacle, producing persistent model-plant mismatch.

\subsubsection{RC-MPPI Parameters}

Running-cost weights are $(w_g, w_v, w_u, w_T) = (5, 0.1, 0.01, 50)$
with $w_{\mathrm{obs}} = 10^4$. The residual statistic is
\begin{equation}
s_k =
\sqrt{
w_p\|\hat{\mathbf{p}}_k - \mathbf{p}_k\|^2
+
w_v'\|\hat{\mathbf{v}}_k - \mathbf{v}_k\|^2
}
\end{equation}
with $(w_p, w_v') = (1.0, 0.5)$ and filtering parameter $\rho = 0.2$.
Residual-adaptive modulation uses $\kappa_r = 1.0$, $\Delta r_{\max}
= 1.0$\,m, $\kappa_\varsigma = 0.5$, and $\kappa_\beta = 5.0$. The
MPPI temperature follows $\beta_k = \beta_0(1 + \kappa_\beta
\bar{s}_k)$ subject to clipping.

\subsubsection{Results}

We performed $n = 50$ paired-seed Monte Carlo trials of 300 control
steps each (Table~\ref{tab:mc_lti}, Fig.~\ref{fig:traj_overlay_lti}).
Vanilla MPPI achieves a success rate of $0.64$, minimum clearance
$0.05 \pm 0.10$\,m, and $4.16 \pm 6.32$ violation steps. RC-MPPI
increases the success rate to $0.94$, improves minimum clearance to
$0.13 \pm 0.09$\,m, and reduces violation steps to $0.62 \pm 2.63$.
These results indicate substantially improved safety and constraint
satisfaction under severe actuator lag and model-plant mismatch.

RC-MPPI exhibits a modest increase in time-to-goal ($232.78 \pm 22.71
\rightarrow 249.00 \pm 12.62$ steps) and path length ($15.85 \pm 0.90
\rightarrow 16.42 \pm 0.74$\,m), consistent with the intended
safety--efficiency tradeoff. As residuals increase, constraint
tightening, penalty amplification, exploration contraction, and
temperature relaxation collectively bias the controller toward safer
trajectories rather than aggressive obstacle-skimming behavior.

The representative trial (seed 21, Fig.~\ref{fig:traj_overlay_lti})
illustrates this tradeoff. Vanilla MPPI penetrates the obstacle region
(minimum clearance $-0.082$\,m, 21 violation steps), whereas RC-MPPI
maintains positive clearance ($0.193$\,m), incurs no violations, and
successfully reaches the goal.

\begin{figure}[t]
\centering
\includegraphics[width=0.90\linewidth]{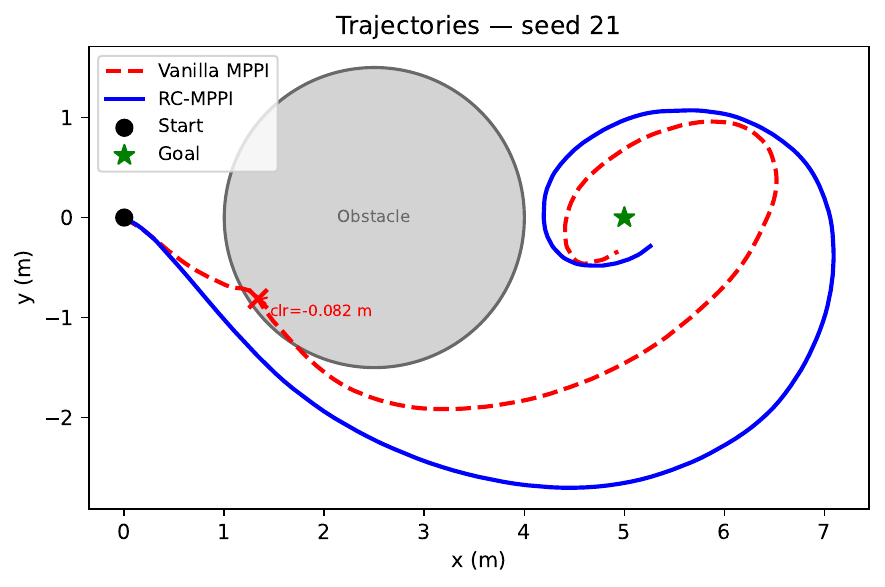}
\caption{Safety--efficiency tradeoff under severe actuator lag (seed
21, $\tau = 0.9$\,s). Vanilla MPPI (red dashed) penetrates the
obstacle and accumulates 21 violation steps, whereas RC-MPPI (blue
solid) maintains positive clearance and successfully reaches the goal
through a more conservative trajectory.}
\label{fig:traj_overlay_lti}
\end{figure}

\begin{table}[t]
\centering
\caption{LTI point-mass system ($n = 50$, $K = 2048$, $\tau =
0.9$\,s). Mean$\pm$std.}
\label{tab:mc_lti}
\begin{tabular}{lcc}
\toprule
Metric & Vanilla MPPI & RC-MPPI \\
\midrule
Success rate & 0.64 & 0.94 \\
Time-to-goal (steps) & $232.78 \pm 22.71$ & $249.00 \pm 12.62$ \\
Min clearance (m) & $0.05 \pm 0.10$ & $0.13 \pm 0.09$ \\
Violation steps & $4.16 \pm 6.32$ & $0.62 \pm 2.63$ \\
Path length (m) & $15.85 \pm 0.90$ & $16.42 \pm 0.74$ \\
\bottomrule
\end{tabular}
\end{table}

\subsection{Planar 2R Manipulator}

\subsubsection{Setup}

The true manipulator parameters are $L_1 = 1.0$\,m, $L_2 = 0.8$\,m,
$M_1 = 1.0$\,kg, and $M_2 = 0.8$\,kg. The nominal planner
intentionally uses mismatched inertial parameters
\begin{equation}
M_1^n = 1.1\ \mathrm{kg},
\qquad
M_2^n = 0.5\ \mathrm{kg}.
\end{equation}
The state is $\mathbf{x} = [q_1, q_2, \dot{q}_1, \dot{q}_2]^\top$
with bounded torque input $|u_{k,i}| \leq 6$\,N$\cdot$m ($i = 1,
2$). The nominal planner rolls out the discrete-time
Euler-integrated equations of motion for a planar (horizontal) 2R arm,
\begin{equation*}\label{eq:2r_dynamics}
\begin{bmatrix}
\mathbf{q}_{k+1} \\ \dot{\mathbf{q}}_{k+1}
\end{bmatrix}
=
\begin{bmatrix}
\mathbf{q}_k + \Delta t\,\dot{\mathbf{q}}_{k+1} \\
\dot{\mathbf{q}}_k + \Delta t\,\mathbf{M}^n(\mathbf{q}_k)^{-1}
\!\left(\mathbf{u}_k -
\mathbf{C}^n(\mathbf{q}_k,\dot{\mathbf{q}}_k)\dot{\mathbf{q}}_k
\right)
\end{bmatrix},
\end{equation*}
where $\mathbf{u}_k \in \mathbb{R}^2$ is the joint torque command,
$\mathbf{M}^n(\mathbf{q})$ is the $2 \times 2$ inertia matrix and
$\mathbf{C}^n(\mathbf{q}, \dot{\mathbf{q}})\dot{\mathbf{q}}$ is the
Coriolis/centripetal vector, both evaluated under the nominal
parameters $(M_1^n, M_2^n)$. Gravity is absent (horizontal plane).
The timestep is $\Delta t = 0.02$\,s. The goal position is
$\mathbf{p}_g = [1.35, 0.35]^\top$\,m. A circular obstacle is
located at $\mathbf{o} = [0.85, 0.20]^\top$ with radius $0.15$\,m.
The MPPI configuration uses $K = 1024$ rollouts and horizon $T = 35$.

\subsubsection{Model-Plant Mismatch}
Three independent mismatch sources are simultaneously present:
\begin{enumerate}
\item first-order actuator lag ($\tau_{\mathrm{servo}} = 0.15$\,s),
\item torque saturation at $\pm 6$\,N$\cdot$m per joint,
\item measurement noise ($\eta_q = 0.002$\,rad,
$\eta_{\dot{q}} = 0.010$\,rad/s).
\end{enumerate}
Together with the inertial parameter mismatch, these effects produce
persistent prediction--execution residuals that activate the RC-MPPI
adaptation mechanisms.

\subsubsection{Results}

We performed $n = 50$ paired-seed Monte Carlo trials of 200 control
steps each (Table~\ref{tab:2r_results}). The manipulator experiment
shows a substantial benefit from residual-adaptive conservatism.
Vanilla MPPI succeeds in 56\% of trials, whereas RC-MPPI achieves a
success rate of 96\%. Violation steps decrease from $3.08 \pm 4.41$
to $0.10 \pm 0.57$, and minimum link clearance improves from
approximately $-0.00 \pm 0.03$\,m to $0.02 \pm 0.01$\,m.

RC-MPPI also improves task efficiency in this experiment: time-to-goal
decreases from $100.56 \pm 89.47$ to $26.24 \pm 36.13$ steps,
end-effector path length decreases from $5.29 \pm 0.41$ to $3.97
\pm 0.22$\,m, and control energy decreases from $3681.03 \pm 446.17$
to $1490.88 \pm 156.49$. The representative trial
(Fig.~\ref{fig:traj_arm}) illustrates the trajectory-level safety
improvement: RC-MPPI maintains positive clearance and converges
rapidly to the goal, whereas vanilla MPPI approaches or penetrates
the obstacle boundary. Fig.~\ref{fig:clearance_arm} further shows
that RC-MPPI maintains strictly positive link clearance throughout
the trial, while vanilla MPPI crosses the zero boundary. These
results support the interpretation that residual-driven tightening,
penalty scaling, exploration contraction, and temperature relaxation
jointly prevent overcommitment to unreliable nominal rollouts under
inertial mismatch, actuator lag, saturation, and measurement noise.

\begin{figure}[t]
\centering
\includegraphics[width=0.975\linewidth]{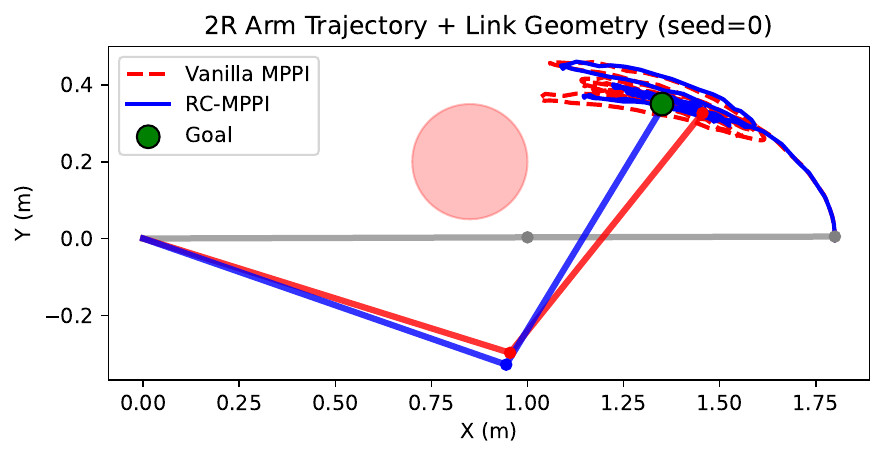}
\caption{Representative manipulator trial showing safety and
convergence improvement. Vanilla MPPI (red dashed) frequently
approaches or penetrates the obstacle boundary under model mismatch,
whereas RC-MPPI (blue solid) maintains positive clearance and
converges rapidly to the goal. Shaded disk: true obstacle.}
\label{fig:traj_arm}
\end{figure}

\begin{figure}[t]
\centering
\includegraphics[width=0.975\linewidth]{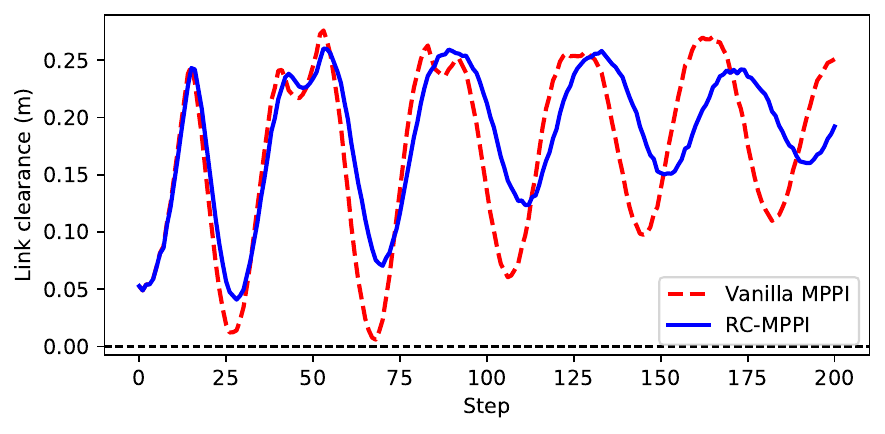}
\caption{Link clearance over time for the representative manipulator
trial (seed 0). At each step, clearance is the minimum distance from
the obstacle center to either link segment, minus the obstacle radius.
RC-MPPI (blue solid) maintains strictly positive clearance
throughout, whereas Vanilla MPPI (red dashed) approaches and crosses
the zero boundary, incurring constraint violations.}
\label{fig:clearance_arm}
\end{figure}

\begin{table}[t]
\centering
\caption{2R manipulator system ($n = 50$, $K = 1024$). Mean$\pm$std.}
\label{tab:2r_results}
\begin{tabular}{lcc}
\toprule
Metric & Vanilla MPPI & RC-MPPI \\
\midrule
Success rate & 0.56 & 0.96 \\
Time-to-goal (steps) & $100.56 \pm 89.47$ & $26.24 \pm 36.13$ \\
Min link clearance (m) & $-0.00 \pm 0.03$ & $0.02 \pm 0.01$ \\
Violation steps & $3.08 \pm 4.41$ & $0.10 \pm 0.57$ \\
EE path length (m) & $5.29 \pm 0.41$ & $3.97 \pm 0.22$ \\
Control energy & $3681.03 \pm 446.17$ & $1490.88 \pm 156.49$ \\
\bottomrule
\end{tabular}
\end{table}

\section{Conclusion}
This paper presented RC-MPPI, a residual-aware sampling-based MPC
framework that modulates safety conservatism through three coupled
mechanisms: residual-dependent constraint tightening, adaptive safety
penalty scaling, and residual-adaptive sampling modulation comprising
temperature relaxation and exploration contraction. The probabilistic
safety analysis, grounded in an $N$-step horizon prediction error
bound (Theorem~\ref{thm:horizon_bound}), establishes that the joint
effect of all three mechanisms monotonically reduces constraint
violation probability with growing residual, and that RC-MPPI
achieves at least the constraint satisfaction probability of vanilla
MPPI with strict improvement whenever model-plant mismatch is nonzero.
The rollout-cost uncertainty analysis further shows that
mismatch-induced weight perturbations are bounded by
$2C_\Delta\bar{s}_k/\beta_k$, providing theoretical justification for
treating temperature as an epistemic parameter encoding confidence in
rollout cost evaluations rather than solely as an exploration
parameter. Paired-seed Monte Carlo simulations on an LTI point-mass
system and a planar 2R manipulator confirm that RC-MPPI consistently
improves constraint satisfaction, success rate, and control efficiency
over vanilla MPPI under significant model-plant mismatch, with the
performance gap widening as model uncertainty grows. Future work will
investigate hardware validation on robotic platforms, integration with
learned residual dynamics models, and extensions to belief-space and
multi-agent sampling-based MPC formulations.

\bibliographystyle{ieeetr}
\bibliography{refs}

\end{document}